\newcommand{\sketch}[1]{}
\newtheorem{theorem}{Theorem}[section]
\newtheorem{lemma}[theorem]{Lemma}
\newtheorem{claim}[theorem]{Claim}
\newtheorem{fact}[theorem]{Fact}
\newcommand{\st}{\mbox{\rm s.t. }}
\def \M   {{\mathcal M}}
\def \I   {{\mathcal I}}
\def \A   {{\mathcal A}}
\def \mZ  {{\mathbb Z}}
\def \N   {{\mathcal N}}
\date{}%
\newenvironment{proofof}[1]{\noindent{\bf Proof of #1:}}%
            {\hspace*{\fill}$\Box$\par\vspace{4mm}}
\newcommand{\removefromfile}[1]{}
\begin{document}
\title{{\bf Approximation Algorithms for Online Weighted Rank Function Maximization under Matroid Constraints} }
\date{}

\author{
Niv Buchbinder \thanks{Computer Science Dept., Open University of Israel. E-mail: niv.buchbinder@gmail.com} \and
  \and Joseph
(Seffi) Naor\thanks{Computer Science Dept., Technion, Haifa,
Israel. E-mail: naor@cs.technion.ac.il} \and R. Ravi\thanks{Tepper School of Business,
 Carnegie Mellon University, Pittsburgh, PA. E-mail: ravi@cmu.edu. Supported in part by NSF award CCF-1143998}
\and Mohit Singh \thanks{Microsoft Research, Redmond and School of Computer Science, McGill University,
 Montreal, Quebec, Canada. E-mail: mohitsinghr@gmail.com}}
 \maketitle

\begin{abstract}
Consider the following online version of the submodular maximization problem under a matroid constraint: We are given a set of elements over which a matroid is defined. The goal is to incrementally choose a subset that remains independent in the matroid over time. At each time, a new weighted rank function of a different matroid (one per time) over the same elements is presented; the algorithm can add a few elements to the incrementally constructed set, and reaps a reward equal to the value of the new weighted rank function on the current set. The goal of the algorithm as it builds this independent set online is to maximize the sum of these (weighted rank) rewards. As in regular online analysis, we compare the rewards of our online algorithm to that of an offline optimum, namely a single independent set of the matroid that maximizes the sum of the weighted rank rewards that arrive over time. This problem is a natural extension of two well-studied streams of earlier work: the first is on online set cover algorithms (in particular for the max coverage version) while the second is on approximately maximizing submodular functions under a matroid constraint.

In this paper, we present the first randomized online algorithms for this problem with poly-logarithmic competitive ratio. To do this, we employ the LP formulation of a scaled reward version of the problem. Then we extend a weighted-majority type update rule along with uncrossing properties of tight sets in the matroid polytope to find an approximately optimal fractional LP solution. We use the fractional solution values as probabilities for a online randomized rounding algorithm. To show that our rounding produces a sufficiently large reward independent set, we prove and use new covering properties for randomly rounded fractional solutions in the matroid polytope that may be of independent interest.

\end{abstract}

\section{Introduction}

Making decisions in the face of uncertainty is the fundamental problem addressed by online computation \cite{BE98}. In many planning scenarios, a planner must decide on the evolution of features to a product without knowing the evolution of the demand for these features from future users. Moreover, any features initially included must be retained for backward compatibility, and hence leads to an online optimization problem: given a set of features, the planner must phase the addition of the features, so as to maximize the value perceived by a user at the time of arrival. Typically, users have diminishing returns for additional features, so it is natural to represent their utility as a submodular function of the features that are present (or added) when they arrive. Furthermore, the set of features that are thus monotonically added, are typically required to obey some design constraints. The simplest are of the form that partition the features into classes and there is a restriction on the number of features that can be deployed in each class. A slight extension specifies a hierarchy over these classes and there are individual bounds over the number of features that can be chosen from each class. We capture these, as well as other much more general restrictions on the set of deployed features, via the constraint that the chosen features form an independent set of a matroid. Thus, our problem is to monotonically construct an independent set of features (from a matroid over the features) online, so as to maximize the sum of submodular function values (users) arriving over time and evaluated on the set of features that have been constructed so far.

This class of online optimization problems
generalizes some early work of Awerbuch et al.~\cite{AAFL96}. They considered a set-cover instance, in which the restriction is to choose at most $k$ sets with the goal of maximizing the coverage of the elements as they arrive over time. This is precisely the online maximization version of the well-studied maximum coverage problem. Even this special case of our problem already abstracts problems in investment planning, strategic planning, and video-on-demand scheduling.

\subsection{Problem Setting, Main Result and Techniques}\label{sec:results}

In our setting\footnote{For preliminaries and basic definitions, please see Section \ref{sec:prelim}.}, we are given a universe of elements $E$, $|E|=m$, and a matroid $\M=(E,\I(\M))$ whose independent sets characterize the limitations on which sets of elements we can choose. At every time step $i$, $1\leq i \leq n$, a client arrives with a non-negative monotone submodular function $f_i:2^{E}\rightarrow \mZ_+$ representing her welfare function. The objective is to maintain a monotonically increasing set $F \in \I(\M)$ over time; that is, the set $F_{i-1}$ of elements (at time $i-1$) can only be augmented to $F_i$ after seeing $f_i$ at time step $i$.
The welfare of client $i$ is then $f_i(F_i)$, and our objective is to maximize $\sum_{i=1}^{n} f_i(F_i)$.
We compare our performance to the offline optimum $\max_{O\in \I(\M)} \sum_{i=1}^{n}f_i(O)$.

In this paper, we are concerned with the case when each of the submodular functions $f_i$ is a weighted rank function of some matroid $\N_i$, i.e., $f_i(S)= \max_{I\subseteq S, I\in \I(\N_i)}\sum_{e\in I} w_{i,e}$ where $w_i: E\rightarrow {\mathbb R}_+$ is an arbitrary weight function. This class of submodular functions is very broad and includes all the examples discussed above; Furthermore, we believe it captures the difficulty of general submodular functions even though we have not yet been able to extend our results to the general case. Nevertheless, there are submodular functions which are not weighted rank functions of a matroid, for example, multi-set coverage function~\cite{CCPV07}.

\begin{theorem}\label{theorem:main1}
There exists a randomized polynomial time algorithm which is $O\left(\log^2 n  \log m \log f_{ratio}\right)$-competitive,
for the online submodular function maximization problem under a matroid constraint over $m$ elements, when each $f_i$, $1\leq i \leq n$, is a weighted rank function of a matroid and  $f_{ratio} =2\frac{\max_{i,e} f_i(\{e\})}{\min_{i,e| f_i(\{e\})\neq 0}f_i(\{e\})}$. In other words, the algorithm maintains monotonically
increasing independent sets $F_i\in \I(\M)$ such that
$$E\left[\sum_{i=1}^n f_i(F_i)\right] \geq \Omega\left(\frac{1}{\log^2 n  \log m \log f_{ratio}}\right)\cdot  \max_{O\in \I(\M)} \sum_{i=1}^n f_i(O).$$
\end{theorem}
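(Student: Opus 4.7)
The plan is to decompose the algorithm into three layers, each contributing one of the logarithmic factors in the competitive ratio. The outermost layer guesses $\OPT$ to within a factor of two by geometric doubling over $O(\log f_{ratio})$ candidate values (after discarding elements whose single-element reward is too small relative to the current guess, which can be safely ignored). The middle layer, for a fixed target $\lambda \approx \OPT$, maintains a monotone fractional solution $x \in P(\M)$ achieving objective $\Omega(\lambda)$ in the natural LP relaxation. The innermost layer rounds $x$ online to an integral monotone independent set without losing too much reward in expectation.

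For the LP, introduce $x_e \in [0,1]$ for each $e \in E$ with $x \in P(\M)$, and $y_{i,e}$ for each client $i$ satisfying $y_{i,e} \leq x_e$ and $y_i \in P(\N_i)$; the LP value $\max \sum_{i,e} w_{i,e}\, y_{i,e}$ upper-bounds $\OPT$. To solve it online I would apply a weighted-majority style multiplicative update. Upon arrival of $f_i$, compute the optimum fractional $y_i$ compatible with the current $x$ (a matroid maximization). If the marginal fractional reward falls short of the per-step target $\lambda/n$, extract via LP duality a certificate set $S \subseteq E$ whose $\M$-rank is saturated by $x$ and whose elements could absorb additional reward, and multiplicatively boost $x_e$ for $e \in S$. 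Uncrossing of the tight-set family in the matroid polytope $P(\M)$ keeps these certificates inside a laminar structure, bounding the total multiplicative growth of any single $x_e$ by $O(\log m)$; together with the $O(\log n)$ contribution of the multiplicative update, this layer loses $O(\log n \log m)$.

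For online rounding, associate with each $e$ a single uniform threshold $\tau_e \in [0,1]$ chosen once up front, add $e$ to $F$ the first time $c \log n \cdot x_e(t)$ crosses $\tau_e$, and at each step take the max-reward $\N_i$-independent subset of the current $F$. The core technical ingredient I would need is a covering lemma for randomized rounding in the matroid polytope: for any $x \in P(\M)$, any matroid $\N$, and any nonnegative weight $w$, if $R$ is obtained by independently sampling each $e$ with probability $\alpha x_e$, then
\[
E\!\left[\max_{I \subseteq R,\, I \in \I(\N)} w(I)\right] \;\geq\; \Omega\!\left(\frac{\alpha}{\log n}\right)\, \max_{y \in P(\N),\, y \leq x}\; w \cdot y.
\]
Setting $\alpha = \Theta(\log n)$ then recovers a constant fraction of the fractional reward and contributes the remaining $O(\log n)$ factor, closing the chain $O(\log f_{ratio}) \cdot O(\log n \log m) \cdot O(\log n)$ of losses claimed in the theorem.

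The main obstacle will be the covering lemma: it couples the two matroids $\M$ and $\N$ through the single vector $x$, yet standard Chernoff-type concentration does not suffice since a random sample from $P(\M)$ need not be $\N$-independent. I would attack it by decomposing the fractional $\N$-optimum $y$ (via Carath\'eodory) into a convex combination of $\N$-independent sets supported within $\{e : x_e > 0\}$, and then showing via a matroid exchange and rounding coupling that a constant-weight portion of some such set survives in $R$ with good probability; the $\log n$ loss absorbs a union bound across the decomposition. A secondary difficulty is preserving $x \in P(\M)$ under monotone multiplicative updates; the plan there is to cap $x_e$ at a fixed threshold (say $1/2$) whenever an enclosing tight set would otherwise be violated, and to charge the capped certificates against $\OPT$ using the laminarity induced by uncrossing.
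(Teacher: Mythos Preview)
Your three-layer decomposition is the right skeleton and matches the paper's (weight-class guessing for the $\log f_{ratio}$; a fractional multiplicative update for $\log n\cdot\log m$; randomized rounding for the final $\log n$), and you correctly isolate a matroid covering lemma as the key technical ingredient. The paper's version (Lemma~4.3) is that independently rounding a point $z$ with $z(S)\le r(S)/2$ yields a set that is covered by $O(\log n)$ independent sets with high probability. But two steps in your plan would fail as written.

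First, the rounding layer does not maintain $F\in\I(\M)$. Threshold-including $e$ as soon as $c\log n\cdot x_e(t)$ crosses $\tau_e$ is equivalent to sampling each $e$ with probability $\min(1,\,c\log n\cdot x_e)$; with $c\log n>1$ this is not even in $P(\M)$ in expectation, and in any case the sample is typically $\M$-infeasible. Your covering lemma, as stated, says nothing about $\M$: it bounds $\N$-reward of an unconstrained sample $R$, but the online output must itself lie in $\I(\M)$. The paper handles this by explicitly testing $F\cup\{e\}\in\I(\M)$ before each insertion and then proving, via a separate coupling argument (Lemma~4.1), that this test discards at most half the mass in expectation; there is no analogue of this step in your sketch, and without it there is no feasible solution to analyze.

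Second, the fractional layer is too loose to deliver the $O(\log m)$ bound. The paper does \emph{not} work with the natural LP; its key device is an auxiliary program $LP_2(\alpha)$ with extra constraints $\tfrac{\alpha}{2}x_e\le\sum_i z_{i,e}\le\alpha x_e$, with $\alpha$ guessed uniformly over $O(\log n)$ powers of two. Those constraints are exactly what let one convert the objective $\sum_{i,e}z_{i,e}$ into $\Theta(\alpha)\sum_e x_e$ (Lemma~3.3) and then compare $\sum_e x_e$ to the optimum via a tight-set argument (Lemma~3.5 and Claim~3.6). Your dual-certificate idea has no visible substitute for this conversion, and the assertion that uncrossing yields a \emph{laminar} family is incorrect: tight sets in a matroid polytope form a distributive lattice (closed under union and intersection), and the paper uses only the existence of a unique maximal tight set (Fact~2.1), not laminarity. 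As a minor point, the $\log f_{ratio}$ factor in the paper comes from guessing a weight class, not from guessing $\OPT$.
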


Our result should be contrasted with the lower bound proved in \cite{AAFL96}\footnote{The lower bounds in \cite{AAFL96} relate even to a special case of uniform matroid and very restricted sub-modular functions.}.

\begin{lemma}\cite{AAFL96}
Any randomized algorithm for the submodular maximization problem under a matroid constraint is $\Omega(\log n \log (m/r))$-competitive, where $r$ is the rank of the matroid. This lower bound holds even for uniform matroids and when all $f_i$ are unweighted rank functions.
\end{lemma}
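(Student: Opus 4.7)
The plan is to prove this via Yao's minimax principle, reducing first to a clean combinatorial problem. Note that an unweighted rank function of a rank-$1$ matroid supported on a subset $T_i \subseteq E$ is simply $f_i(S) = \mathbf{1}[S \cap T_i \neq \emptyset]$. Taking each $\N_i$ to be such a rank-$1$ matroid, and taking $\M$ to be the uniform matroid of rank $r$, the problem reduces to an online max-$r$-coverage game: the adversary reveals sets $T_1, \ldots, T_n \subseteq E$ one at a time, the algorithm maintains a monotone set $F_i$ of at most $r$ elements, and reward $1$ is earned at step $i$ iff $F_i \cap T_i \neq \emptyset$. It therefore suffices to exhibit a distribution over sequences $(T_i)$ on which every deterministic online algorithm incurs an expected loss factor of $\Omega(\log n \log(m/r))$ against the best fixed $r$-subset of $E$.

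The hard distribution I would build has two nested layers. The inner layer drives the $\log(m/r)$ factor by a random halving chain. Partition the time horizon of one ``phase'' into $L = \lfloor \log_2(m/r)\rfloor$ sub-stages. Privately sample a uniformly random decreasing chain $E = S_0 \supset S_1 \supset \cdots \supset S_L$ with $|S_\ell| = m/2^\ell$ (so $|S_L| \approx r$), and during sub-stage $\ell$ release target sets $T_i$ drawn uniformly from subsets of $S_\ell$. The membership $S_{\ell+1} \subseteq S_\ell$ is revealed only when sub-stage $\ell+1$ begins. The outer layer drives the $\log n$ factor by concatenating $P = \lfloor \log_2 n\rfloor$ independently-sampled such phases, where the matroid cap $r$ persists across phases (elements once added cannot be replaced).

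For the analysis, the offline optimum is $\Omega(n)$: in hindsight one may commit all $r$ elements to a single set contained in $\bigcap_{p=1}^{P} S_L^{(p)}$ (this intersection is nonempty in expectation as $|S_L^{(p)}| \approx r$), and such a fixed set covers every $T_i$ in every sub-stage of every phase. For the online side, I would argue by a potential / entropy argument that at the beginning of sub-stage $\ell$ of phase $p$, because $S_\ell^{(p)}$ was chosen as a random $|S_{\ell-1}^{(p)}|/2$-subset of $S_{\ell-1}^{(p)}$ independently of the algorithm's prior commitments, only $1/2^{\ell}$ of its elements in expectation still lie inside $S_\ell^{(p)}$, so maintaining coverage forces it to spend a fresh budget each sub-stage. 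Since the total budget is $r$ across all $PL = \Theta(\log n \log(m/r))$ sub-stages, the algorithm covers at most an $O(1/(PL))$ fraction of the $T_i$'s in a typical sub-stage, giving expected reward $O(n/(PL))$ and the claimed competitive ratio.

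The main obstacle will be the composition step: showing that the $\log n$ and $\log(m/r)$ losses truly multiply rather than merely add. The delicate point is that elements committed in early phases should offer essentially no help inside later, independently-drawn chains $S_0^{(p)} \supset \cdots \supset S_L^{(p)}$. I would formalize this by tracking, for each phase $p$, the random variable $U_p = |F_{\text{end of phase }p-1} \cap S_L^{(p)}|$ and showing via independence of the chains that $\mathbb{E}[U_p] \le r \cdot 2^{-L} = O(1)$, so carrying slots across phases cannot amortize the cost; combined with a per-phase inner bound one obtains the product. The inner per-phase lower bound of $\Omega(\log(m/r))$ itself reduces to a standard random-halving argument, and the outer composition argument mirrors the classical lower bounds for online set cover, but care is needed to handle the fact that $F_i$ is monotone and the algorithm is adaptive with respect to all previously-revealed randomness.
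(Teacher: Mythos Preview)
The paper does not give its own proof of this lemma: it is stated with a citation to \cite{AAFL96} and no argument is supplied. So there is no ``paper's proof'' to compare against; the question is simply whether your sketch stands on its own.

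It does not, and the failure is in the offline-optimum step. You assert that the offline optimum is $\Omega(n)$ because one may place all $r$ elements inside $\bigcap_{p=1}^{P} S_L^{(p)}$, and you justify this by saying ``this intersection is nonempty in expectation as $|S_L^{(p)}| \approx r$''. But your phases are sampled \emph{independently}, so the $S_L^{(p)}$ are independent uniform $r$-subsets of the ground set of size $m$. The expected size of their intersection is $m\cdot(r/m)^{P}$, which for $P=\Theta(\log n)$ and $r\ll m$ is exponentially small; the intersection is empty with overwhelming probability. More broadly, with independent phases no single $r$-set can be simultaneously good in all of them, so the offline benchmark collapses and the instance no longer witnesses a multiplicative $\log n \cdot \log(m/r)$ gap. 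This is exactly the ``composition step'' you flagged as delicate, but the difficulty is more fundamental than you indicate: it is not merely that old elements ``offer essentially no help'' to the online algorithm in later phases, it is that they offer no help to the \emph{offline} optimum either, which is what you need to be large.

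The standard fix (and what the construction in \cite{AAFL96} does) is to couple the phases through a single hidden target: one samples a fixed ``correct'' set $O^\star$ of size $r$ once, and in every phase the halving chain is generated conditioned on terminating at (or containing) $O^\star$. Then the offline solution $O^\star$ genuinely earns $\Omega(n)$, while to the online algorithm each phase still looks like a fresh random halving because the conditional distribution of the chain given past observations remains symmetric over the surviving elements. Your per-phase halving intuition and the budget-counting across $PL$ sub-stages are on the right track, but the hard distribution must be built around a common hidden optimum rather than independent chains.
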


We note that the $O(\log m)$ factor in our analysis can be improved slightly to an $O(\log (m/r))$ factor with a more careful analysis. A lower bound of $\Omega(\log f_{ratio})$ also follows even when the functions $f_i$ are linear (see, for example, \cite{BN09}).

\vspace{0.15cm}\noindent
{\bf Main Techniques.} To prove our results, we combine techniques from online computation and combinatorial optimization. The first step is to formulate an
integer linear programming formulation for the problem. Unfortunately, the natural linear program is not well-suited for the online
version of the problem. Thus, we formulate a different linear program in which we add an extra constraint that each element
$e$ \emph{contributes} roughly the same value to the objective of the optimal solution. While this may not be true in general,
we show that an approximate optimal solution satisfies this requirement.

We note that the online setting we study is quite different from
the online packing framework studied by \cite{BN09} and leads to new technical challenges. 
In particular, there are two obstacles in applying the primal-dual techniques in \cite{BN09} to our setting.
First, the linear formulation we obtain (that is natural for our problem) is not a strict packing LP and contains negative variables (See Section \ref{sec:frac}).
Second, the number of packing constraints is exponential, and hence the techniques of \cite{BN09}
would give a linear competitive factor rather than a polylogarithmic
one.
Nevertheless, we present in Section~\ref{sec:frac} an online algorithm which gives a fractional
solution to the linear program having a large objective value. One of
the crucial ingredients is the \emph{uncrossing} property of
tight sets for any feasible point in the matroid polytope.

To obtain an integral solution, we perform in Section~\ref{sec:randomized} a natural randomized rounding procedure to select fractionally chosen elements. But, we
have to be careful to maintain that the selected elements continue to form an independent set. The main challenge in the analysis
is to tie the performance of the randomized algorithm to the performance of the fractional algorithm. 
As a technical tool in our proof, we show in Lemma~\ref{lem:matroid-cover} that randomly rounding a fractional solution in the matroid polytope gives a set which can be covered by $O(\log n)$ independent sets with high probability. This lemma may be of independent interest and similar in flavor to the results of Karger~\cite{Karger98} who proved a similar result for packing bases in the randomly rounded solution.

\subsection{Related results}
Maximizing monotone submodular function under matroid constraints has been a well studied problem and even many special cases have been studied widely (see survey by Goundan and Schulz~\cite{GoundanS07}). Fisher, Nemhauser and Wolsey~\cite{FNW2} gave a $(1-\frac1e)$-approximation when the matroid is the uniform matroid and showed that the greedy algorithm gives a $\frac 12$-approximation. This was improved by Calinescu {\em at al}~\cite{CCPV07} and Vondr\'{a}k~\cite{V08} who gave a $(1-\frac1e)$-approximation for the general problem. They also introduced the \emph{multi-linear extension} of a submodular function and used pipage rounding introduced by Ageev and Sviridenko~\cite{AgeevS04}. The facility location problem was introduced by Cornuejols et al.~\cite{CFN77} and was the impetus behind studying the general submodular function maximization problem subject to matroid constraints. The submodular welfare problem can be cast as a submodular maximization problem subject to a matroid constraint and the reduction appears in Fisher {\em et al.}~\cite{FNW2} and the problem has been extensively studied ~\cite{NemhauserW78,LehmannLN01,MirrokniSV08,KhotLMM08}. The result of Vondr\'{a}k~\cite{V08} implies a $(1-\frac1e)$-approximation for the problem. Despite the restricted setting of our benefit functions, we note that recent work in welfare maximization in combinatorial auctions~\cite{DughmiRY11} has focused on precisely the case when the valuations are matroid rank sums (MRS) that we consider in our model.

A special case of our online problem was studied by Awerbuch et al. \cite{AAFL96}. They studied an online variant of the max-coverage problem, where given $n$ sets covering $m$ elements, the elements arrive one at a time, and the goal is to pick up to $k$ sets online to maximize coverage.
 They obtained a randomized algorithm that is $O(\log n \log (m/k))$-competitive for the problem and proved that this is optimal in their setting. Our results generalize both the requirement on the cardinality of the chosen sets to arbitrary matroid constraints, and the coverage functions of the arriving elements to monotone submodular functions that are weighted rank functions of matroids.

Another closely related problem with a different model of uncertainty was studied by Babaioff et al. \cite{BIK07}.
They studied a setting in which elements of a matroid arrive in an online fashion and the goal is to construct an independent set that is competitive with the maximum weight independent set. They considered the random permutation model which is a non-adversarial setting, and obtained an $O(\log k)$-competitive algorithm for general matroid, where $k$ is the rank of the matroid, and constant competitive ratio for several interesting matroids. Recently, Bateni et al. \cite{BHZ10} studied the same model where the objective function is a submodular function (rather than linear). 

Chawla et al.~\cite{CHMS10} study Bayesian optimal mechanism design to maximize expected revenue for a seller while allocating items to agents who draw their values for the items from a known distribution. Their development of agent-specific posted price mechanisms when the agents arrive in order, and the items allocated must obey matroid feasibility constraints, is similar to our setting. In particular, we use the ideas about certain ordering of matroid elements (Lemma 7 in their paper) in the proof that our randomized rounding algorithm give sufficient profit.

\section{Preliminaries}
\label{sec:prelim}

Given a set $E$, a function $f:2^{E}\rightarrow {\mathbb R}_+$ is called \emph{submodular} if for all sets $A,B\subseteq E$, $f(A)+f(B)\geq f(A\cap B)+f(A\cup B)$.
Given set $E$ and a collection $\I\subseteq 2^{E}$, $\M=(E,\I(\M))$ is a \emph{matroid} if (i) for all $A\in \I$ and $B\subset A$ implies that  $B\in \I$ and
(ii) for all $A,B\in \I$ and $|A|> |B|$ then there exists $a\in A\setminus B$ such that $B\cup \{a\}\in \I$.  Sets in $\I$ are called {\em independent sets}
of the matroid $\M$. The rank function $r:2^{E}\rightarrow R^+$ of matroid $\M$ is defined as $r(S)=\max_{T\in \I: T\subseteq S} |T|$. A standard property of matroids is the fact
that the rank function of any matroid is submodular.

We also work with weighted rank functions of a matroid, defined as $f(S)=\max_{I\subseteq S, I\in \I(\M)}\sum_{e\in I} w_{e}$ for some weight function $w: 2^{E}\rightarrow {\mathbb R}_+$. Given any matroid $\M$, we define the matroid polytope to be the convex hull of independent sets $P(\M)=conv\{\bf{1}_{I}: I\in \I\}\subseteq {\mathbb R}^{|E|}$.
Edmonds~\cite{E70} showed that  $P(\M)=\{x\geq 0: x(S)\leq r(S)\;\; \forall\; S\subseteq E\}$. We also use the following fact about fractional
points in the matroid polytope (The proof follows from standard uncrossing arguments. See Schrijver~\cite{Schrijver}, Chapter 40).

\begin{fact}\label{fact:uncross}
Given a matroid $\M=(E,\I(\M))$ with rank function $r$ and feasible point $x\in P(\M)$, let $\tau=\{S\subseteq E: x(S)=r(S)\}$. Then,
 $\tau$ is closed under intersection and union and there is a single maximal set in $\tau$.
\end{fact}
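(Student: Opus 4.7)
The plan is to derive the closure of $\tau$ under union and intersection from the standard interplay between the submodularity of the rank function $r$ and the modularity of $x$, viewed as the measure $x(S)=\sum_{e\in S} x_e$. The single-maximal-set statement then follows immediately from closure under union together with the finiteness of $E$.

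Concretely, fix $A,B\in\tau$. I would first record the three basic ingredients: (i) by definition of $\tau$, $x(A)=r(A)$ and $x(B)=r(B)$; (ii) $x$ is modular, so $x(A)+x(B)=x(A\cup B)+x(A\cap B)$; and (iii) $r$ is submodular and $x\in P(\M)$, so $x(A\cup B)\le r(A\cup B)$ and $x(A\cap B)\le r(A\cap B)$ and $r(A\cup B)+r(A\cap B)\le r(A)+r(B)$. Chaining these gives
\[
r(A)+r(B)=x(A)+x(B)=x(A\cup B)+x(A\cap B)\le r(A\cup B)+r(A\cap B)\le r(A)+r(B),
\]
so equality must hold throughout. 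Combined with the individual inequalities $x(A\cup B)\le r(A\cup B)$ and $x(A\cap B)\le r(A\cap B)$, this forces $x(A\cup B)=r(A\cup B)$ and $x(A\cap B)=r(A\cap B)$, i.e.\ both $A\cup B$ and $A\cap B$ belong to $\tau$.

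For the existence of a unique maximal element, I would simply note that $E$ is finite, so $\tau$ contains a set of maximum cardinality, say $T^\ast$; if $T'\in\tau$ were not contained in $T^\ast$, the set $T^\ast\cup T'$ would lie in $\tau$ (by the union closure just proved) and strictly exceed $T^\ast$ in size, contradicting maximality. Hence $T^\ast$ contains every member of $\tau$ and is therefore the unique maximal set.

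There is no genuine obstacle here; the only thing to be careful about is keeping the direction of each inequality straight so that the sandwich collapses to equalities, and making sure $x$ is treated as a modular set function (which is immediate from the definition $x(S)=\sum_{e\in S}x_e$). The argument is the textbook uncrossing proof and matches the reference to Schrijver's Chapter 40 cited in the statement.
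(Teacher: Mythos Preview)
Your proof is correct and is precisely the standard uncrossing argument the paper defers to (the paper does not write out a proof, it merely cites Schrijver, Chapter~40). The sandwich between modularity of $x$, submodularity of $r$, and the polytope constraints is exactly the intended route, and your derivation of the unique maximal tight set from union-closure and finiteness is the standard finish.
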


\section{Linear Program and the Fractional Algorithm}\label{sec:frac}

We now give a linear program for the online submodular function maximization problem and show how to construct a feasible fractional solution online which is $O(\log m \log n\log f_{ratio})$-competitive.
Before we give the main theorem, we first formulate a natural LP.  Let $O\subseteq E $ denote the optimal solution with the objective $\sum_{i=1}^n f_i(O)$. Since each $f_i$ is the weighted rank function of matroid $\N_i$, we have that $f_i(O)=w_i(O_i)=\sum_{e\in O_i} w_{i,e}$ where $O\supseteq O_i \in \I(\N_i)$. For the sake of simplicity, we assume that $w_{i,e}=1$. In Section~\ref{sec:weighted}, we show that this assumption can be removed with a loss of $O(\log f_{ratio})$ factor in the competitive ratio. Observe that in this case, $f_i(S)=r_i(S)$, where $r_i$ is the rank function of matroid $\N_i$ for any set $S\subseteq E$.

We next formulate a linear program where $x_e$ is the indicator variable for whether $e\in O$ and $z_{i,e}$ is the indicator variable for whether $e\in O_i$. Since $O\in \I(M)$ and $O_i \in \I(\N_i)$, we have that $x\in P(\M)$ and $z_i\in P(\N_i)$ as represented by constraints (\ref{cons:matm}) and constraints (\ref{cons:matn}), respectively in Figure \ref{fig:lp-submodular2}.

\begin{figure}[t]
\begin{center}
\begin{eqnarray}
LP_1: \qquad \max & \sum_{i=1}^n\sum_{e \in E}z_{i,e} \nonumber\\
\st \nonumber \\
\forall S\subseteq E & \sum_{e \in S}x_{e}\leq r(S) \label{cons:matm}\\
\forall 1 \leq i \leq n, S\subseteq E & \sum_{e \in S} z_{i,e} \leq r_i(S) \label{cons:matn}\\
\forall 1 \leq i \leq n ,e \in E & z_{i,e} \leq x_e \\
\forall 1 \leq i \leq n ,e \in E & z_{i,e}, x_e \geq 0 \nonumber
\end{eqnarray}
\end{center}
\caption{LP for maximizing a sum  of (unweighted) rank functions subject to matroid constraint} \label{fig:lp-submodular2}
\end{figure}

We prove the following theorem.

\begin{theorem}\label{theorem:fractional}
There exists a polynomial time algorithm $\A$ that constructs a feasible fractional solution $(x,z)$ online to $LP_1$ which is $O(\log n\log m)$-competitive. That is, the algorithm $\A$ maintains monotonically increasing solution $(x,z)$ such that  $\sum_{i=1}^n\sum_{e\in E}z_{i,e} = \Omega(\frac{\sum_{i=1}^n f_i(O)}{\log n \log m})$ where $O$ is the optimal integral solution.
\end{theorem}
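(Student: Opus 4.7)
The plan is to design an online fractional primal-dual algorithm combining a multiplicative-weight update on the $x_e$ variables with a feasibility-maintenance rule based on Fact~\ref{fact:uncross}. One should think of $x_e$ as a budget that is grown (monotonically) as evidence accumulates that $e$ belongs to the offline optimum $O$.

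First I would carry out a scaling reduction. For each $e \in O$ define $v_e = |\{i : e \in O_i\}|$; in the unweighted setting $v_e \in [1,n]$. Bucketing the elements of $O$ by $\lfloor \log_2 v_e \rfloor$ and guessing the dominant bucket (equivalently, running $O(\log n)$ parallel copies of the algorithm, one per guessed value $V$, and taking the best) reduces us, at a cost of an $O(\log n)$ factor in the ratio, to instances where every element of the optimum contributes $\Theta(V)$ to $\OPT$. So it suffices to build an algorithm with ratio $O(\log m)$ on this scaled instance.

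The online algorithm itself maintains $x_e \geq 1/m$ and proceeds as follows. At round $i$, compute the best fractional $z_i$ solving $\max \sum_e z_{i,e}$ subject to $z_i \in P(\N_i)$ and $z_{i,e} \leq x_e$ (a polynomial-time matroid LP). If this value already meets a per-round target of $\Omega(r_i(O)/\log m)$, commit to $z_i$; otherwise, multiplicatively update $x_e \leftarrow x_e(1+\epsilon)$ on bottleneck elements (those where the cap $z_{i,e}=x_e$ is active) and re-solve inside the round. To preserve $x \in P(\M)$ during these updates, Fact~\ref{fact:uncross} supplies at each moment a unique maximal tight set $T^*$ of $P(\M)$; updates are permitted only on $e \notin T^*$, which provides infinitesimal slack in every matroid constraint of $\M$ and hence keeps $x$ feasible. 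Uncrossing plays a second essential role: it lets each update be charged, on the dual side, against a single constraint (that of $T^*$), bypassing the exponential-size dual and enabling a poly-logarithmic analysis.

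The competitive analysis is an exponential-potential / weighted-majority argument. Each $x_e$ is capped at $1$, so with update factor $(1+\epsilon)$ it can undergo at most $O(\log m / \epsilon)$ multiplicative updates before saturating, bounding the total number of bottleneck events by $O(m \log m / \epsilon)$. Deficits from ``bad'' rounds (where the pre-update $z_i$ fell short of the target) are amortized against these updates, giving an $O(\log m)$ per-round loss after summation. Multiplying by the $O(\log n)$ scaling factor yields the claimed $O(\log n \log m)$ ratio, with monotonicity of $x$ and $z$ immediate from the fact that all updates only increase values. The main technical obstacle is the uncrossing step: one must show that even when updates are restricted to $E \setminus T^*$, enough elements of $O$ remain ``free'' to cover the deficit in every bad round. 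This requires combining matroid exchange between $O$ and $T^*$ with the union/intersection closure supplied by Fact~\ref{fact:uncross}, and is precisely where this proof departs from a standard online packing primal-dual framework such as the one in \cite{BN09}.
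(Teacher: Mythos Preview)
Your high-level scaffold (bucket the optimum by $v_e$, then aim for an $O(\log m)$ ratio per bucket via multiplicative updates on $x$ together with uncrossing to stay in $P(\M)$) matches the paper's. But the algorithm you describe has a genuine gap: its inner-loop stopping rule ``commit once $\sum_e z_{i,e} \geq \Omega(r_i(O)/\log m)$'' requires knowing $r_i(O)$, the optimum's value in round $i$, which is not available online; without it you have no termination criterion and the algorithm is not well-defined. Relatedly, you guess $V$ but then never use it in the algorithm itself, so the guess does no work. And the amortization you sketch counts at most $O(\log m/\epsilon)$ updates per element of $E$, hence $O(m\log m/\epsilon)$ bottleneck events in total; after bucketing $\OPT = \Theta(V\cdot |O|)$, and nothing you wrote ties $m$ to $|O|$ or the per-round deficits to those events, so the $O(\log m)$ ratio does not follow from what is written. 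The closing paragraph even concedes that the crucial step --- showing enough of $O$ remains outside the maximal tight set $T^*$ to cover every bad round --- is unresolved.

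The paper avoids all of this by a different mechanism. After guessing $\alpha$ (your $V$), it \emph{uses $\alpha$ in the update rule}: the step is $x_e \leftarrow x_e\cdot\exp(8\log m/\alpha)$, so an element saturates after at most $\alpha/4$ updates rather than $O(\log m)$. There is no per-round target and no inner loop: for each $e$ a single test is made (is $e$ in a tight set of $P(\M)$, or a near-tight set of $P(\N_i)$?), and if not one sets $z_{i,e}=x_e/2$ and bumps $x_e$ once. The analysis is global rather than round-by-round: bucketing guarantees each $e\in O$ lies in at least $\alpha/2$ of the $O_i$, and since it can be updated at most $\alpha/4$ times, in at least $\alpha/4$ of those rounds it was \emph{blocked} by some (near-)tight set. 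Uncrossing (Fact~\ref{fact:uncross}) and Claim~\ref{cl:size} then convert ``many blocked elements of $O$'' into $\sum_e x_e \geq \Omega(|O|)$ (Lemma~\ref{lem:fractional-value}), and Lemma~\ref{obsxz} turns size into value: $\sum_{i,e} z_{i,e} = \Omega(\alpha|O|/\log m) = \Omega(\OPT_\alpha/\log m)$. So the guessed parameter is algorithmic, not bookkeeping, and the comparison is solution \emph{size} versus $|O|$, never per-round value versus the unknown $r_i(O)$.
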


To prove Theorem~\ref{theorem:fractional}, instead of working with the natural linear program $LP_1$, we formulate a different linear program. The new linear program is indexed by an integer $\alpha$ and places the constraints that each $e\in O$ occurs in $[\frac{\alpha}{2},\alpha]$ different $O_i$'s as represented by constraints (\ref{added-cons11}) and (\ref{added-cons22}). The parameter $\alpha$ will be defined later.

\begin{figure}[t]
\begin{center}
\begin{eqnarray}
LP_2(\alpha): \qquad \max & \sum_{i=1}^n\sum_{e \in E}z_{i,e} \nonumber\\
\st \nonumber \\
\forall S\subseteq E & \sum_{e \in S}x_{e}\leq r(S) \label{cons-mat}\\
\forall 1 \leq i \leq n, S\subseteq E & \sum_{e \in S} z_{i,e} \leq r_i(S) \label{constri}\\
\forall 1 \leq i \leq n ,e \in E & z_{i,e} \leq x_e \label{constzexe}\\
\forall e \in E &  \sum_{i=1}^{n}z_{i,e} \leq \alpha x_e \label{added-cons11} \\
\forall e \in E & \sum_{i=1}^{n}z_{i,e} \geq \frac{\alpha x_e}{2} \label{added-cons22} \\
\forall 1 \leq i \leq n ,e \in E & z_{i,e}, x_e \geq 0  \nonumber
\end{eqnarray}
\end{center}
\caption{A restricted LP for the submodular function maximization subject to matroid constraint} \label{lp-rest2}
\end{figure}

The following lemma shows that if we pick $O(\log n)$ different values of $\alpha$ then the sum of the integer solutions to the linear programs $LP_2(\alpha)$ perform as well as the optimal solution.\footnote{We assume that the algorithm knows the value of $n$. In Section~\ref{sec:double} we show how to deal with an unknown $n$ losing an additional small factor.}
\begin{lemma}\label{lem:reduction}
Let $OPT$ denote the value of an optimal integral solution to linear program $LP_1$ and let $OPT_\alpha$ denote the optimal value of an optimal integral solution to the linear program $LP_2(\alpha)$ for each $\alpha\in \{1,2,4,\ldots, 2^{\lceil \log n\rceil}\}$. Then $OPT\leq \sum_{\alpha\in \{1,2,4,\ldots, 2^{\lceil \log n \rceil}\}} OPT_\alpha$.
\end{lemma}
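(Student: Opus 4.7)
The plan is to \emph{bucket} the elements of the optimal solution $O$ by how often they participate in the optimal per-step solutions $O_i$, and then to exhibit, for each bucket, an integral feasible solution to the corresponding restricted LP $LP_2(\alpha)$ whose value accounts for the contribution of that bucket to $OPT$.

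First, I would fix an optimal integral solution $O$ to $LP_1$ together with optimal $O_i \in \I(\N_i) \cap 2^O$ for $1 \le i \le n$, so that $OPT = \sum_i |O_i| = \sum_{e \in O} c_e$ where $c_e := |\{i : e \in O_i\}|$. Since $c_e \in \{0,1,\dots,n\}$, I can discard elements with $c_e = 0$ (they contribute nothing to $OPT$) and, for the remaining elements, assign each $e$ to the unique dyadic level $\alpha(e) \in \{1,2,4,\dots,2^{\lceil \log n\rceil}\}$ determined by $\alpha(e)/2 < c_e \le \alpha(e)$. Let $O^\alpha = \{e \in O : \alpha(e) = \alpha\}$; these sets partition the nonzero-contribution part of $O$, so $OPT = \sum_\alpha \sum_{e \in O^\alpha} c_e$.

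Next, for each $\alpha$, I would construct the natural integral solution for $LP_2(\alpha)$: set $x_e = 1$ iff $e \in O^\alpha$, and $z_{i,e} = 1$ iff $e \in O^\alpha \cap O_i$, with all other variables zero. I then verify feasibility: (\ref{cons-mat}) holds because $O^\alpha \subseteq O \in \I(\M)$; (\ref{constri}) holds because $\{e : z_{i,e}=1\} = O^\alpha \cap O_i \subseteq O_i \in \I(\N_i)$; (\ref{constzexe}) is immediate from the construction; and for each $e \in O^\alpha$ we have $\sum_i z_{i,e} = c_e$, which lies in $(\alpha/2, \alpha]$ by the choice of $\alpha(e)$, giving (\ref{added-cons11}) and (\ref{added-cons22}). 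The objective value of this solution is $\sum_{e \in O^\alpha} c_e$, so $OPT_\alpha \ge \sum_{e \in O^\alpha} c_e$.

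Summing the inequality over the $\lceil \log n\rceil + 1$ dyadic values of $\alpha$ gives
\[
  \sum_{\alpha} OPT_\alpha \;\ge\; \sum_{\alpha} \sum_{e \in O^\alpha} c_e \;=\; \sum_{e \in O} c_e \;=\; OPT,
\]
which is the claimed bound. There is no real obstacle here: the only subtle point is choosing the bucketing convention so that the intervals $(\alpha/2, \alpha]$ both cover every possible $c_e \in \{1,\dots,n\}$ (handled by taking $\alpha$ up to $2^{\lceil \log n \rceil} \ge n$) and satisfy the two-sided constraints (\ref{added-cons11})--(\ref{added-cons22}) simultaneously, which the half-open dyadic choice does.
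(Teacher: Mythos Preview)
Your proof is correct and follows essentially the same dyadic-bucketing idea as the paper: partition the elements of $O$ by the value $c_e=\sum_i z^*_{i,e}$ into levels $(\alpha/2,\alpha]$ and exhibit, for each level, an integral feasible solution to $LP_2(\alpha)$ whose objective equals that level's contribution to $OPT$. The only difference is that you set $x_e=\mathbf{1}[e\in O^\alpha]$ whereas the paper keeps $x_e=\mathbf{1}[e\in O]$ for every $\alpha$; your choice is in fact the cleaner one, since it makes constraint~(\ref{added-cons22}) hold for all $e$ (not just those in the active bucket).
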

\begin{proof}
Consider the optimal (integral) solution $(x^*,z^*)$ to $LP_1$. We decompose this solution in to $\lceil \log n \rceil$ integral solutions $(y(\alpha),z(\alpha))$ for each $\alpha\in \{1,2,4,\ldots, 2^{\lceil \log n \rceil}\}$ where $(y(\alpha),z(\alpha))$ is feasible to $LP_2(\alpha)$. Set $y(\alpha)_e=x^*_e$ for each $e\in E$ and each $\alpha$. Consider any $e\in E$. If $\alpha/2<\sum_{i=1}^n z^*_{i,e}\leq \alpha$ then set $z(\alpha)_{i,e}=1$ for each $i$ such that $z^*_{i,e}=1$. Set all other $z(\alpha)_{i,e}=0$. Clearly, $(y(\alpha),z(\alpha))$ is feasible and is a decomposition of the solution $(x^*,z^*)$ and hence sum of the objectives of $(y(\alpha),z(\alpha))$ equals the objective of $(x^*,z^*)$.
\end{proof}

Using the above lemma, a simple averaging argument shows that for some guess $\alpha$, the optimal integral solution to $LP_2(\alpha)$ is within a $\log n$ factor of the optimal integral solution to $LP_1$. Hence, we construct an algorithm which first guesses $\alpha$ and then constructs an approximate fractional solution to $LP_2(\alpha)$.

\subsection{Online Algorithm for a Fractional LP Solution}
Given a fractional solution $x$, we call a set $S\subseteq E$ tight (with respect to $x$) if $x(S)=r(S)$.

\vspace{0.15cm}
\noindent \framebox[1.05\width][l]{
\begin{minipage}{0.94\linewidth}
{\bf Guessing Algorithm:}
\begin{itemize}\setlength{\itemsep}{1pt}\setlength{\parskip}{0pt}
\item Guess the value $\alpha\in_R\{1, 2, 4 \ldots, n\}$.
\item Run AlgG with value $\alpha$.
\end{itemize}
\end{minipage}}
\vspace{0.15cm}

\vspace{0.15cm}
\noindent \framebox[1.05\width][l]{
\begin{minipage}{0.94\linewidth}
{\bf AlgG:} 
\begin{itemize}\setlength{\itemsep}{1pt}\setlength{\parskip}{0pt}
\item Initialize $x_e \gets 1/m^2$ (where $m=|E|$), set $z_{i,e}=0$ for each $i,e$.
\item When function $f_i$ arrives, order the elements arbitrarily.
\item For each element $e$ in order:
\item If $\forall S| e\in S, \ x(S) < r(S) \mbox{ and }  z_i(S) < r_i(S) - 1/2$:
\begin{eqnarray} x_e &\gets& \min \left\{x_e \cdot \exp \left(\frac{8 \log m}{\alpha}\right), \min_{S| e\in S}\{r(S)-x(S\setminus \{e\})\}\right\}\label{mult-label22}\\
z_{i,e} &\gets& x_e/2 \label{assignz}\end{eqnarray}
\end{itemize}
\end{minipage}}
\vspace{0.15cm}

Using an independence oracle for each of the matroids, each of the conditions can be checked in polynomial time by reducing it to submodular function minimization (See Schrijver~\cite{Schrijver}, Chapter 40) and therefore the running time of the algorithm is polynomial. Note that the fractional algorithm is carefully designed. For example, it is very reasonable to update greedily the value of $z_{i,e}$ even when the algorithm does not update the value $x_e$ (of course, ensuring that $z_i\in P(\N_i)$). While such an algorithm does give the required guarantee on the performance of the fractional solution, it is not clear how to round such a solution to an integral solution. In particular, our algorithm for finding a fractional solution  is tailored so as to allow us to use the values as rounding probabilities in a randomized algorithm.

Before we continue, we define some helpful notation regarding the online algorithm. Let $x_{i,e}(\alpha)$ be the value of the variable $x_e$ after the arrival of user $i$ for some guess $\alpha$. Let $\Delta x_{i,e}(\alpha)$ be the change in the value of $x_e$ when user $i$ arrives. Let $x_e(\alpha)$ be the value of $x_e$ at the end of the execution. Similarly, let $z_{i,e}(\alpha)$ be the value of $z_{i,e}$ at the end of the execution.
We start with the following lemma that follows from the update rule (\ref{mult-label22}). 

\begin{lemma}\label{obsxz}
For any element $e\in E$, and guess $\alpha$,
\begin{eqnarray}
\sum_{i=1}^{n} z_{i,e}(\alpha) \geq  \frac{\alpha}{48 \log m}\left(x_e(\alpha)-\frac{1}{m^2}\right)\label{z_x}\end{eqnarray}
where $x_e(\alpha)$ is the value at the end of the execution of AlgG.
\end{lemma}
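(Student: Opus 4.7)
The plan is to derive a per-step lower bound $z_{i,e}(\alpha) \geq \frac{\alpha}{48 \log m}\,\Delta x_{i,e}(\alpha)$ and then sum across $i=1,\ldots,n$. Since $\sum_{i=1}^n \Delta x_{i,e}(\alpha) = x_e(\alpha) - 1/m^2$ telescopes to the total increase of $x_e$ over the execution of AlgG, this immediately yields the claimed inequality.

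Fix a step $i$. If the update condition of AlgG fails at step $i$, then $x_e$ is not changed and $z_{i,e}(\alpha) = 0$, so the per-step inequality is trivial. Otherwise, let $a$ and $b$ denote the values of $x_e$ just before and just after the step. By the update rule (\ref{mult-label22}), $a < b \leq a\exp(\beta)$ where $\beta := 8\log m/\alpha$, and by (\ref{assignz}), $z_{i,e}(\alpha) = b/2$. Setting $r := b/a \in (1, e^\beta]$, I compute
\[
\frac{z_{i,e}(\alpha)}{\Delta x_{i,e}(\alpha)} \;=\; \frac{b/2}{b-a} \;=\; \frac{r}{2(r-1)}.
\]
The map $r \mapsto r/(2(r-1))$ is strictly decreasing on $(1,\infty)$, so its minimum on $(1, e^\beta]$ occurs at $r = e^\beta$, yielding the lower bound $e^\beta/(2(e^\beta-1))$.

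To finish, I use the elementary inequality $1 - e^{-\beta} \leq \beta$ (valid for all $\beta \geq 0$), which rearranges to $e^\beta - 1 \leq \beta e^\beta$, and hence $e^\beta/(e^\beta - 1) \geq 1/\beta$. Substituting gives
\[
\frac{z_{i,e}(\alpha)}{\Delta x_{i,e}(\alpha)} \;\geq\; \frac{1}{2\beta} \;=\; \frac{\alpha}{16 \log m} \;\geq\; \frac{\alpha}{48 \log m}.
\]
Summing this per-step bound over $i$ and telescoping the $\Delta x_{i,e}(\alpha)$ terms proves the lemma (in fact with a constant of $16$, which is absorbed into the stated $48$).

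The only point worth flagging is the possibility of \emph{truncation} in (\ref{mult-label22}), when the minimum is achieved by the tight-set bound $\min_{S\ni e}\{r(S) - x(S\setminus\{e\})\}$ rather than the exponential term $a \exp(\beta)$; here $b/a$ is strictly less than $e^\beta$. This is not an obstacle: because $r/(2(r-1))$ is decreasing in $r$, truncation can only increase the per-step ratio, so the untruncated exponential update is the worst case, and the analysis above covers both regimes uniformly.
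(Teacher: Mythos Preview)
Your proof is correct and follows the same overall strategy as the paper --- bound $z_{i,e}$ below by a multiple of $\Delta x_{i,e}$ at each step and telescope --- but your execution is cleaner. The paper splits into two cases according to whether $\alpha \le 8\log m$; in the small-$\alpha$ regime it keeps only the final update, and in the large-$\alpha$ regime it bounds $z_{i,e} = x_{i,e}/2 \ge x_{i-1,e}/2$ and then applies $e^x - 1 \le 3x$ (valid only for $x\in[0,1]$, which is precisely why the case split is needed). You instead work directly with the ratio $r = b/a$ and use the inequality $e^\beta - 1 \le \beta e^\beta$, which holds for all $\beta \ge 0$; this collapses the two cases into one and yields the sharper constant $16$ in place of $48$. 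Your remark on truncation is also correct and handles the second branch of the $\min$ in \eqref{mult-label22} more explicitly than the paper does.
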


\begin{proof}
Fix $e$, if $\alpha \leq 8\log m$, then the consider the last $z_{i,e}$ that we update (if there are none, the claim follows trivially). In this iteration $ z_{i,e} \gets x_e(\alpha)/2 \geq \frac{\alpha}{16\log m}\left(x_e(\alpha)-\frac{1}{m^2}\right)$.
Otherwise, if $\alpha \geq 8\log m$ then,

\begin{eqnarray}
\sum_{i=1}^{n} z_{i,e}(\alpha) & = & \sum_{i: \Delta x_{i,e}(\alpha)>0}x_{i,e}(\alpha)/2 \geq \sum_{i: \Delta x_{i,e}(\alpha)>0}x_{i-1,e}(\alpha)/2 \nonumber \\
& \geq & \frac{\alpha}{48\log m} \sum_{i: \Delta x_{i,e}(\alpha)>0} \Delta x_{i,e}(\alpha)  =  \frac{\alpha}{48\log m}\left(x_e(\alpha)-\frac{1}{m^2}\right) \label{ineqw1}
\end{eqnarray}

where Inequality (\ref{ineqw1}) follows since $e^x -1 \leq 3x$ for $0\leq x \leq 1$, and therefore, $\Delta x_{i,e} \leq x_{i,e}\cdot \left(\exp \left(\frac{8 \log m}{\alpha}\right)-1\right) \leq x_{i,e}\frac{24\log m}{\alpha}$ if $x_{i,e}$ is updated by the first condition in the equation (\ref{ineqw1}). If it is updated by the second condition, the inequality still holds since $\Delta x_{i,e}$ is even smaller.
\end{proof}

Next we prove that the solution produced by AlgG is almost feasible with respect to the optimal solution to $LP_2(\alpha)$.

\begin{lemma}[\textbf{Feasibility Lemma}]\label{lem-feasible}
Let $(x(\alpha),z(\alpha))$ be the fractional solution generated by AlgG at the end of the sequence. Then, it satisfies all constraints of $LP_2(\alpha)$ except constraints (\ref{added-cons22}).
\end{lemma}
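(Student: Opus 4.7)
The plan is to verify the four constraints $(\ref{cons-mat})$, $(\ref{constri})$, $(\ref{constzexe})$, and $(\ref{added-cons11})$ of $LP_2(\alpha)$ one at a time, drawing only on the explicit update rule of AlgG; essentially all the work goes into the last one.

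For constraint $(\ref{cons-mat})$ I would argue by induction on the sequence of updates: the initialization $x_e = 1/m^2$ gives $x(S)\le |S|/m^2\le r(S)$ for every nonempty $S$ (WLOG no loops, which contribute nothing to the objective), and each update in $(\ref{mult-label22})$ is explicitly capped at $\min_{S\ni e}\{r(S)-x(S\setminus\{e\})\}$, which is exactly the largest value of $x_e$ that preserves $x(S)\le r(S)$ for every $S\ni e$. For $(\ref{constri})$, when $f_i$ arrives the vector $z_i$ starts at $0$, and the trigger condition $z_i(S)<r_i(S)-1/2$ for every $S\ni e$, combined with $x_e\le r(\{e\})\le 1$ (an instance of $(\ref{cons-mat})$ on a singleton), gives the post-update bound $z_i^{\mathrm{new}}(S) = z_i^{\mathrm{old}}(S) + x_e/2 < r_i(S) - 1/2 + 1/2 = r_i(S)$. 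Constraint $(\ref{constzexe})$ is immediate: the algorithm sets $z_{i,e} = x_e/2$ at iteration $i$, and since $x$ only grows thereafter, $z_{i,e}\le x_e(\alpha)$ holds at the end.

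The real work is in $(\ref{added-cons11})$. Set $c := \exp(8\log m/\alpha)$ and let $i_1<i_2<\cdots<i_k$ be the iterations at which $z_{i,e}$ is updated, so $\sum_i z_{i,e} = \tfrac12\sum_{j=1}^k x_{i_j,e}$. The key structural observation I would push is: if an update in $(\ref{mult-label22})$ is capped by the matroid term, then some $S^*\ni e$ becomes tight and stays tight forever (raising any $x_{e'}$ for $e'\in S^*$ would violate $x(S^*)\le r(S^*)$), so the condition $x(S)<r(S)$ for every $S\ni e$ in AlgG blocks every subsequent update of $x_e$. Consequently only the last of the $k$ updates can be capped, and the rest are purely multiplicative: $x_{i_j,e} = c\,x_{i_{j-1},e}$ for $j<k$ (with $x_{i_0,e}:=1/m^2$). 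Telescoping then gives, up to a lower-order correction from the possibly capped last step,
\[
\sum_i z_{i,e} \;\le\; x_e(\alpha)\cdot\frac{c}{c-1}.
\]
A two-regime split closes the bound: for $\alpha\ge 8\log m$, the inequality $e^t - 1\ge t$ yields $c/(c-1) = O(\alpha/\log m)$, so the right-hand side is $O(\alpha x_e(\alpha)/\log m) \le \alpha x_e(\alpha)$; for $\alpha<8\log m$, one has $c\ge e$, so $c/(c-1) < e/(e-1) < 2$ and the right-hand side is $O(x_e(\alpha))\le \alpha x_e(\alpha)$, with the very small-$\alpha$ corner handled directly because $c\ge m^2$ already caps $x_e$ on the very first update, forcing $k=1$ and $\sum_i z_{i,e} = x_e(\alpha)/2$.

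The main obstacle is the blocking claim: turning a single capping event into a permanent freeze of $x_e$ is what allows the multiplicative growth factor $c$ to telescope into a geometric-series bound. Without it, $\sum_i z_{i,e}$ could accumulate unboundedly many $x_e/2$ terms after $x_e$ has stopped growing, and $(\ref{added-cons11})$ would be out of reach.
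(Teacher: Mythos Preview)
Your argument is correct and, for constraints $(\ref{cons-mat})$, $(\ref{constri})$, $(\ref{constzexe})$, essentially identical to the paper's (you are in fact slightly more careful than the paper about the initialization, explicitly noting that loops must be excluded so that $x(S)\le |S|/m^2\le r(S)$ holds at the start).

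For constraint $(\ref{added-cons11})$ both you and the paper rely on the same structural observation---the ``blocking claim'' that once an update of $x_e$ is capped by the matroid term, $e$ lies in a tight set that remains tight forever, so no further update of $x_e$ (hence of any $z_{i,e}$) occurs. The difference is only in how the bound is extracted from this. The paper simply \emph{counts}: every non-final update multiplies $x_e$ by $\exp(8\log m/\alpha)$, and since $x_e$ starts at $1/m^2$ and can never exceed $1$, there are at most $\alpha/4+1\le\alpha$ updates in total; bounding each $z_{i,e}=x_{i,e}/2\le x_e(\alpha)$ then gives $\sum_i z_{i,e}\le \alpha\,x_e(\alpha)$ in one line, with no case split on $\alpha$. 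Your geometric-series telescoping to $\sum_i z_{i,e}\le x_e(\alpha)\cdot c/(c-1)$ is equally valid---and the constants do work out to give the exact constraint, not just a big-$O$ version---but it forces the two-regime split on $\alpha$ and the separate treatment of $\alpha=1$. So: same idea, slightly longer route; the paper's counting argument is the shortcut you were looking for.
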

\begin{proof}
We prove that the solution is feasible.
\vspace{-.4cm}
\paragraph{Matroid constraints (\ref{cons-mat}).}
Clearly, the algorithm never violates the matroid constraints by the second term in the equation (\ref{mult-label22}) in the algorithm.
\vspace{-.4cm}
\paragraph{Constraints (\ref{constri}) and constraints (\ref{constzexe}).}
$z_{i,e}\gets x_{i,e}(\alpha)/2 \leq x_e(\alpha)/2$, thus constraints (\ref{constzexe}) hold.
Finally, by the algorithm behavior we only update $z_{i,e}$ if for all $S | e\in S$, $z_i(S) < r_i(S) - 1/2$. Since by the above observations $z_{i,e} \leq x_e(\alpha)/2 \leq 1/2$, we never violate constraints (\ref{constri}) after the update.
\vspace{-.4cm}
\paragraph{Constraints (\ref{added-cons11}).}
This constraint follows since $$\sum_{i=1}^{n} z_{i,e} = \sum_{i:\Delta x_{i,e}>0}x_{i,e}(\alpha)/2\leq x_e(\alpha)|\{i:\Delta x_{i,e}>0\}|$$
However, after $\alpha$ augmentations, $x_e(\alpha) \geq \frac{1}{m^2}\exp\left(\frac{8\log m}{\alpha} \cdot \alpha\right) >1$. Thus, $x_e$ must be in a tight set and so by design we never update $x_e$ and any $z_{i,e}$.
\end{proof}

In order to evaluate the performance of the algorithm we first show that the \emph{size} of the solution returned by the algorithm is large as compared to the optimal integral solution. Later in Lemma~\ref{lem:guess}, we relate the objective value of the solution to its size. This lemma uses crucially the properties of the matroid. 
\begin{lemma}[\textbf{Large Fractional Size}]\label{lem:fractional-value}
Let $(x^*(\alpha),z^*(\alpha))$ be an optimal integral solution to $LP_2(\alpha)$. Let $(x(\alpha),z(\alpha))$ be the fractional solution generated by AlgG at the end of the sequence. Then we have,
$\sum_{e \in E} x_e(\alpha)\geq \frac{1}{16}\sum_{e \in E} x^*_e(\alpha)$.
\end{lemma}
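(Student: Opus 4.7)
The plan is to set up a dichotomy on the optimal support $O := \{e : x^*_e(\alpha)=1\}$ based on the maximal set $T^*$ of matroid $\M$ that is tight under the final fractional solution $x(\alpha)$, which exists by Fact~\ref{fact:uncross}. One half of the mass is easy: since $x(T^*)=r(T^*)\geq |O\cap T^*|$ (because $O$ is independent in $\M$), we get $\sum_e x_e(\alpha)\geq |O\cap T^*|$ for free. All the work goes into handling $L:=O\setminus T^*$, with the goal of proving $\sum_e x_e(\alpha)\geq |L|/8$, after which averaging the two bounds yields the $|O|/16$ bound stated.

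I would first argue that for any $e\in L$, $e$ never belongs to a set tight in $\M$ during the execution: if $S\ni e$ became tight at some step, then monotonicity of $x$ together with the cap in (\ref{mult-label22}) would keep $x(S)=r(S)$ to the end, forcing $S\subseteq T^*$ by maximality and so $e\in T^*$, a contradiction. Consequently condition~(A) never blocks $e$, and the capping branch of the update rule is never the binding one for $x_e$, so every update of $x_e$ is a pure multiplicative step by $\exp(8\log m/\alpha)$. Since $x_e\leq r(\{e\})\leq 1$, starting from $1/m^2$ this allows at most $\alpha/4$ updates. On the other hand constraint (\ref{added-cons22}), applied to the integral optimum, guarantees $|\{i:e\in O_i^*\}|\geq\alpha/2$, so for each $e\in L$ there are at least $\alpha/4$ users $i$ with $e\in O_i^*$ where the algorithm declined to update $x_e$, and all of these blocks must be due to condition~(B). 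Collect these elements into $B_i:=\{e\in L:\text{(B) blocks $e$ when user $i$ is processed}\}$, each carrying a witness $S_e^i\ni e$ with $z_i(S_e^i)\geq r_i(S_e^i)-1/2$ at the moment of the block; since $z_i$ only grows during user $i$'s processing, this inequality persists to the end of that user's phase. By the double-counting $\sum_i|B_i|=\sum_{e\in L}|\{i:e\in B_i\}|\geq \alpha|L|/4$.

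The main technical step, and the only nontrivial obstacle, is to turn the witnesses into a global bound $z_i(E)\geq |B_i|/2$ for each $i$. I would do this with a short induction using submodularity of $r_i$, modularity of $z_i$, and the pointwise feasibility $z_i\leq r_i$: for any family of sets $A_1,\dots,A_k$, one shows
\[
r_i\Bigl(\bigcup_{j}A_j\Bigr)-z_i\Bigl(\bigcup_{j}A_j\Bigr)\ \leq\ \sum_{j=1}^{k}\bigl(r_i(A_j)-z_i(A_j)\bigr),
\]
by peeling off $A_k$, applying submodularity and modularity to the union, and using that the intersection contributes a nonnegative gap term that can be discarded. Applying this to $\{S_e^i:e\in B_i\}$ gives $r_i(U_i)-z_i(U_i)\leq |B_i|/2$ where $U_i=\bigcup_{e\in B_i}S_e^i$, and since $B_i\subseteq O_i^*\cap U_i$ with $O_i^*$ independent in $\N_i$, we have $r_i(U_i)\geq |B_i|$, hence $z_i(E)\geq z_i(U_i)\geq |B_i|/2$. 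Summing over $i$ and invoking constraint (\ref{added-cons11}) from the Feasibility Lemma gives
\[
\alpha\sum_{e}x_e(\alpha)\ \geq\ \sum_{e}\sum_{i}z_{i,e}\ =\ \sum_{i}z_i(E)\ \geq\ \tfrac{1}{2}\sum_i|B_i|\ \geq\ \tfrac{\alpha|L|}{8},
\]
so $\sum_e x_e(\alpha)\geq |L|/8$. Combining with $\sum_e x_e(\alpha)\geq |O\cap T^*|$ and $|O|=|O\cap T^*|+|L|$, averaging the two bounds gives $\sum_e x_e(\alpha)\geq \tfrac{1}{2}(|O\cap T^*|+|L|/8)=\tfrac{|O|}{16}+\tfrac{7|O\cap T^*|}{16}\geq |O|/16$, completing the proof. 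The calibration of the $1/2$ slack in condition~(B) is precisely what makes the total gap $|B_i|\cdot(1/2)=|B_i|/2$ small enough for the independence bound $r_i(U_i)\geq |B_i|$ to give a useful net inequality.
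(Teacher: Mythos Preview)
Your proof is correct and follows essentially the same route as the paper: split the optimal support into elements lying in a tight set of $\M$ versus those that do not, bound the first part by $r(T^*)$, and for the second part use the $\alpha/4$-many blocked rounds to argue $z_i(E)\geq |B_i|/2$ and then invoke constraint~(\ref{added-cons11}).

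Two small differences are worth noting. First, the paper does an explicit case split on whether the tight or the blocked elements constitute at least $n_\alpha/2$ of the optimum, whereas you bound both pieces separately and combine by convexity; your version is a little cleaner and avoids the ``whichever half is larger'' bookkeeping. Second, and more substantively, the paper proves the bound $\sum_e z_{i,e}\geq |B_i|/2$ (its Claim~\ref{cl:size}) by a raise-to-tightness argument in the matroid polytope: push each $z_e$ for $e\in B_i$ up until some set becomes tight, then use the maximal tight set in $P(\N_i)$ and the fact that the total push was at most $|B_i|/2$. You instead get the same inequality by the telescoping submodularity bound $r_i(\bigcup_j A_j)-z_i(\bigcup_j A_j)\leq \sum_j(r_i(A_j)-z_i(A_j))$. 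Both arguments are short and standard; yours is arguably more direct since it avoids modifying $z_i$, while the paper's has the advantage of reusing Fact~\ref{fact:uncross} exactly as in the tight-in-$\M$ case.
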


\begin{proof}
For any element $e\in E$, let $P_e = \{i \ | \ z^*_{i,e}(\alpha)=1\}$. By constraint (\ref{added-cons22}) of the linear formulation $|P_e| \geq \alpha/2$. Observe that there can be at most $\frac{\alpha}{4}$ iterations where $\Delta x_{i,e}(\alpha)>0$ or equivalently $z_{i,e}>0$. Otherwise after $\frac{\alpha}{4}$ updates, the value of $x_e(\alpha)$ would be
 \begin{eqnarray}
\frac{1}{m^2}\exp\left(\frac{8\log m (\alpha/4)}{\alpha}\right) > 1
\end{eqnarray}
which is a contradiction.
By this observation we get that there are two possibilities for any element $e$ such that $x^*_e(\alpha)=1$. Since $\alpha$ is fixed we use $x$ instead of $x(\alpha)$.

\begin{enumerate}
\item There exists $S \subseteq E$, $e\in S$ such that $x(S) = r(S)$ (recall that we term such elements ``tight").
\item There exists at least $\frac{\alpha}{4}$ iterations $i\in P_e$ such that $z_{i,e}(S_i)\geq r_i(S_i)-\frac{1}{2}$ for some $S_i$ containing $e$.
\end{enumerate}

Let $n_\alpha$ be the number of elements such that $x^*_e(\alpha)=1$. Then, either the first condition holds for $n_\alpha/2$ elements, or the second condition holds for $n_\alpha/2$ elements. We prove that the Lemma holds in each of these options.

\paragraph{First condition holds for $n_\alpha/2$ elements:}
In this case, let $G$ be the set of all elements $e$ with $x^*_e(\alpha)=1$ for which there exists a set $S$ containing $e$ with $x(S)=r(S)$. The first condition implies that $|G|\geq \frac{n_\alpha}{2}$. Let $\tau=\{S: x(S)=r(S)\}$ be the set of all tight sets. Fact~\ref{fact:uncross} implies that there is a single maximal set in $\tau$, say, $\bar{S}$. Moreover, each element of $G$ is in one of the sets and $G\in \I(\M)$. Therefore
\begin{eqnarray*}\sum_{e\in E} x_e\geq \sum_{e\in \bar{S}}x_e=  r(\bar{S}) \geq |G|\geq \frac{n_\alpha}{2}
\end{eqnarray*}
Here the first inequality follows by summing only on elements that are in maximal tight sets. The second inequality follows since the maximal tight set contain $n_\alpha/2$ elements of the optimal solution that satisfies the matroid constraints.

\paragraph{Second condition holds for $n_\alpha/2$ elements:}
In this case, we have that there are at least $\alpha/4$ $i \in P_e$ there exists a set $S_i\subseteq E, e\in S_i$ such that $z_i(S_i) \geq r_i(S_i)-1/2$.

We next define for each iteration $i$ the following set:
$$G_i= \{e \ | \  e \mbox{ is {\bf not} tight }, z^*_{i,e}=1, z_{i,e}=0\}$$
Then we have,
$\sum_{i=1}^{n}|G_i| \geq \frac{n_\alpha}{2} \cdot \frac{\alpha}{4}$ and by Lemma \ref{lem-feasible} the solution $z_i\in P(\N_i)$. We now show the following claim.

\begin{claim}\label{cl:size}
Given a matroid $\N=(E,\I)$ and a vector $z\in P(\N)$ and a set $G\in \I$ such that for all $e\in G$, there exists a set $S$ containing $e$ such that $z(S) \geq r(S)-\frac12$. Then $\sum_{e\in E} z_e \geq \frac{|G|}{2}$.
\end{claim}
\begin{proof}
Consider the elements of $G$ in some order. When considering $e$, increase $z_e$ as much as possible while ensuring that $z$ remains in $P(\N)$. Call the new solution $y$. Observe that for each $e\in G$, there exists a set $S$ such that $y(S)=r(S)$. Consider the set of tight sets $\tau$ with respect to $y$ and let $S$ be the maximal set in $\tau$ as given by Fact~\ref{fact:uncross}. Then $\sum_{e\in E} y_e \geq  y(S)= r(S)\geq |G|$ since $G\in \I$ and $G\subseteq S$. But $\sum_{e\in E} z_e\geq \sum_{e\in E} y_e -\sum_{e\in G} \frac12$ since only variables in $G$ are increased to a fraction of at most half. Thus we have $\sum_{e\in E} z_e \geq \frac{|G|}{2}$ as required.
\end{proof}

Applying the claim for each $G_i$ since $G_i \in \I(\N_i)$, we obtain that for each $i$,
\begin{equation*}
\sum_{e \in E} z_{i,e}\geq |G_i|/2 \implies \sum_{i=1}^{n}\sum_{e \in E}z_{i,e} \geq \frac{n_\alpha \cdot \alpha}{16}
\end{equation*}

Since Lemma \ref{lem-feasible} implies that for each element $e$, $\sum_{i=1}^{n} z_{i,e} \leq \alpha x_e$, we have

$\alpha \sum_{e\in E} x_e \geq  \sum_{e\in E} \sum_{i=1}^{n} z_{i,e} \geq \frac{n_\alpha \cdot \alpha}{16}$, which concludes the proof.
\end{proof}

Finally, we prove a lemma bounding the performance of the algorithm.

\begin{lemma}\label{lem:guess}
For any guess value $\alpha$, the algorithm maintains a fractional solution to $LP_2(\alpha)$ such that:
$$ \sum_{e\in E}\sum_{i=1}^{n}z_{i,e}(\alpha) = \Omega\left(\frac{OPT_\alpha}{\log m}\right),$$ \label{main-eqn}
where $OPT_\alpha$ is objective of an optimal integral solution to $LP_2(\alpha)$.
\end{lemma}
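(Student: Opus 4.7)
The plan is to chain together the two key lemmas already established—Lemma~\ref{lem:fractional-value} (Large Fractional Size) and Lemma~\ref{obsxz} (the per-element lower bound $\sum_i z_{i,e} \ge \frac{\alpha}{48\log m}(x_e - 1/m^2)$)—together with the structural constraints of $LP_2(\alpha)$, to give a direct comparison between the fractional objective produced by AlgG and $OPT_\alpha$.

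First I would express $OPT_\alpha$ in terms of the integral $x^*(\alpha)$. Since constraint (\ref{added-cons11}) enforces $\sum_{i} z^*_{i,e}(\alpha) \le \alpha\, x^*_e(\alpha)$ for every $e$, summing over $e$ gives
\[
OPT_\alpha \;=\; \sum_{e\in E}\sum_{i=1}^n z^*_{i,e}(\alpha) \;\le\; \alpha\sum_{e\in E} x^*_e(\alpha),
\]
so it suffices to control $\sum_e x^*_e(\alpha)$. Lemma~\ref{lem:fractional-value} immediately gives
\[
\sum_{e\in E} x_e(\alpha) \;\ge\; \tfrac{1}{16}\sum_{e\in E} x^*_e(\alpha) \;\ge\; \frac{OPT_\alpha}{16\alpha}.
\]

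Next I would sum the per-element bound of Lemma~\ref{obsxz} over all $e\in E$:
\[
\sum_{e\in E}\sum_{i=1}^{n} z_{i,e}(\alpha) \;\ge\; \frac{\alpha}{48\log m}\left(\sum_{e\in E} x_e(\alpha) \;-\; \frac{|E|}{m^2}\right) \;=\; \frac{\alpha}{48\log m}\left(\sum_{e\in E} x_e(\alpha) \;-\; \frac{1}{m}\right).
\]
Combining with the previous inequality yields
\[
\sum_{e\in E}\sum_{i=1}^n z_{i,e}(\alpha) \;\ge\; \frac{\alpha}{48\log m}\left(\frac{OPT_\alpha}{16\alpha} - \frac{1}{m}\right) \;=\; \frac{OPT_\alpha}{768\log m} \;-\; \frac{\alpha}{48\,m\log m}.
\]

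The only mildly delicate step is absorbing the subtractive term $\frac{\alpha}{48\,m\log m}$ into the main term. I would handle this by a trivial-case separation: if $OPT_\alpha \le 64\alpha/m$ (a constant of integer-valued objective; in particular $OPT_\alpha$ is at most $O(1)$), the lemma is vacuous since we can always guarantee $\sum_{e,i} z_{i,e}(\alpha) \ge 0 = \Omega(OPT_\alpha/\log m)$ up to the hidden constant, or by observing that AlgG trivially achieves $\Omega(1)$ on the first element where any $z^*_{i,e}=1$. Otherwise, $OPT_\alpha > 64\alpha/m$ forces $\frac{\alpha}{48\,m\log m} \le \frac{OPT_\alpha}{3072\log m}$, and the two terms combine to give $\sum_{e,i} z_{i,e}(\alpha) = \Omega(OPT_\alpha/\log m)$, as required. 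There is no real obstacle here—the proof is a mechanical combination of the earlier matroid-based lemma with the multiplicative-update bound—so the only care needed is in tracking constants and verifying that the additive error from initializing $x_e = 1/m^2$ is negligible.
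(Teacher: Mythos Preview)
Your approach is essentially identical to the paper's: both chain Lemma~\ref{obsxz} and Lemma~\ref{lem:fractional-value} with the constraint $\sum_i z^*_{i,e}\le \alpha x^*_e$ to compare $\sum_{i,e} z_{i,e}(\alpha)$ against $OPT_\alpha$.

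The only substantive difference is how the additive $1/m$ correction from the initialization $x_e=1/m^2$ is absorbed, and your handling of this is where the writeup wobbles. Your case ``$OPT_\alpha \le 64\alpha/m$'' does \emph{not} imply $OPT_\alpha=O(1)$ (since $\alpha$ may be as large as $n$), and ``$0=\Omega(OPT_\alpha/\log m)$'' is only true when $OPT_\alpha=0$. The paper avoids this detour entirely: it first disposes of the trivial case $x^*\equiv 0$, and otherwise uses integrality to get $\sum_e x^*_e(\alpha)\ge 1$, which makes the $1/m$ term a constant-fraction perturbation of $\frac{1}{16}\sum_e x^*_e(\alpha)$. (Equivalently, constraint~(\ref{added-cons22}) forces $OPT_\alpha\ge \alpha/2$ whenever $OPT_\alpha>0$, so for $m$ larger than a fixed constant your ``bad'' case is in fact empty.) Replacing your case split with this one-line observation makes the argument clean and matches the paper exactly.
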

\begin{proof}
Let $(x^*,z^*)$ denote the optimal integral solution to $LP_2(\alpha)$. If $x^*_e=0$ for each $e$, then the lemma follows immediately.
We have the following
\begin{displaymath}
\begin{array}{cccl}
\sum_{e\in E}\sum_{i=1}^{n}z_{i,e}(\alpha) & \geq & \frac{\alpha}{48\log m}\sum_{e\in E}\left(x_e(\alpha)-\frac{1}{m^2}\right) & ( Lemma ~\ref{obsxz})\vspace{0.2cm}\\
& \geq & \frac{\alpha}{48\log m}\sum_{e \in E} \left(\frac{x^*_e(\alpha)}{16} - \frac{1}{m^2}\right) &  (Lemma~\ref{lem:fractional-value})\vspace{0.2cm}\\
& = & \Omega\left(\frac{1}{\log m}\sum_{e\in E}\sum_{i=1}^{n}z_{i,e}^*(\alpha)\right) &\label{ineqf3}\vspace{0.2cm}
\end{array}
\end{displaymath}

where the last equality follows since in $LP_2(\alpha)$ for each element $\sum_{i=1}^{n}z_{i,e}^*(\alpha) \leq \alpha x^*_e$ and $\sum_{e\in E} x^*_e\geq 1$. This completes the proof of Lemma.
\end{proof}

Finally, we get our main theorem.
\begin{theorem}
The online algorithm for the fractional LP solution (of $LP_1$) is $O(\log m \log n)$-competitive.
\end{theorem}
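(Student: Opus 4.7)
The plan is to combine the decomposition Lemma~\ref{lem:reduction} with the per-guess guarantee of Lemma~\ref{lem:guess}, using the randomization over $\alpha$ in the Guessing Algorithm to average out the loss.

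First, I would check that the fractional solution $(x(\alpha),z(\alpha))$ produced by AlgG on any guess $\alpha$ is actually a feasible solution to $LP_1$ (not merely to $LP_2(\alpha)$). By Lemma~\ref{lem-feasible}, it satisfies the matroid constraints (\ref{cons-mat}) on $x$, the matroid constraints (\ref{constri}) on each $z_i$, and the domination constraint (\ref{constzexe}) $z_{i,e}\le x_e$. These are precisely the constraints of $LP_1$; the only constraint of $LP_2(\alpha)$ that AlgG may violate is the lower bound (\ref{added-cons22}), which does not appear in $LP_1$. So $(x(\alpha),z(\alpha))$ is a valid fractional solution to $LP_1$ regardless of which $\alpha$ is guessed.

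Next, I would evaluate the expected objective value. Since $\alpha$ is drawn uniformly at random from $\{1,2,4,\ldots,2^{\lceil\log n\rceil}\}$, which has only $\lceil\log n\rceil+1 = O(\log n)$ possible values, the probability of selecting any particular $\alpha$ is $\Omega(1/\log n)$. Combining Lemma~\ref{lem:guess} with Lemma~\ref{lem:reduction} gives
\begin{equation*}
E\Bigl[\sum_{e,i}z_{i,e}(\alpha)\Bigr]
= \frac{1}{\lceil\log n\rceil+1}\sum_{\alpha}\sum_{e,i}z_{i,e}(\alpha)
\geq \frac{1}{O(\log n)}\sum_{\alpha}\Omega\Bigl(\frac{OPT_\alpha}{\log m}\Bigr)
= \Omega\Bigl(\frac{OPT}{\log n\log m}\Bigr),
\end{equation*}
where the final equality uses $\sum_\alpha OPT_\alpha \ge OPT$ from Lemma~\ref{lem:reduction} and $OPT$ denotes the optimal integral value of $LP_1$ (equivalently, the value of the offline optimum).

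Since the integrality gap considerations are handled inside $OPT_\alpha$ via the integral optimum of $LP_2(\alpha)$, comparing to the integer optimum of $LP_1$ is precisely what we want for competitiveness. There is no serious technical obstacle here; the argument is essentially a two-line averaging once the earlier lemmas are in place. The only point to be careful about is confirming that the constraint set of $LP_1$ is a strict subrelaxation of $LP_2(\alpha)$ (the added constraints (\ref{added-cons11}) and (\ref{added-cons22}) only shrink the feasible region), so that feasibility of the online solution for $LP_2(\alpha)$ modulo (\ref{added-cons22}) automatically implies feasibility for $LP_1$.
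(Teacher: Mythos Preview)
Your proposal is correct and follows essentially the same approach as the paper: combine Lemma~\ref{lem-feasible} (feasibility), Lemma~\ref{lem:guess} (per-guess guarantee), Lemma~\ref{lem:reduction} (decomposition over $\alpha$), and the observation that each of the $O(\log n)$ values of $\alpha$ is chosen with probability $\Omega(1/\log n)$. Your write-up is in fact more explicit than the paper's, which compresses the argument into a single sentence citing these lemmas.
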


\begin{proof}
The proof follows by combining Lemma (\ref{lem-feasible}), Lemma (\ref{lem:guess}), Lemma (\ref{lem:reduction}) and the observation that there are $O(\log n)$ possible values of $\alpha$ each is guessed with probability $\Omega(1/\log n)$.
\end{proof}

\section{Randomized Rounding Algorithm}\label{sec:randomized}

In this section we present a randomized algorithm for the unweighted problem that is $O(\log ^2 n \log m)$-competitive when each submodular function $f_i$ is a rank function of a matroid.
The algorithm is based on the fractional solution designed in Section \ref{sec:frac}. Although our rounding scheme is extremely simple, the proof of its correctness involves carefully matching the performance of the rounding algorithm with the performance of the fractional algorithm. Indeed, here the fact that $LP_2(\alpha)$ has extra constraints not present in $LP_1$ is used very crucially.

\begin{theorem}\label{theorem:rounding}
The expected profit of the randomized algorithm is $\Omega\left(\frac{OPT}{\log m \log^2 n}\right)$.
\end{theorem}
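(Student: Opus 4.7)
The plan is to randomly round the fractional solution $(x(\alpha),z(\alpha))$ produced by AlgG into an integral independent set of $\M$ online, losing one additional $O(\log n)$ factor. Since the guessing step costs $O(\log n)$ (Lemma~\ref{lem:reduction}) and AlgG is $O(\log m)$-competitive against $LP_2(\alpha)$ (Lemma~\ref{lem:guess}), an $O(\log n)$ rounding loss yields the claimed $\Omega(OPT/(\log m \log^2 n))$ bound. Concretely, I would draw independent uniform thresholds $\theta_e \in [0,1]$ up front, pick a scale $\beta = \Theta(1/\log n)$, and online maintain the tentative set $R_i = \{e : \theta_e \leq \beta\,x_{i,e}(\alpha)\}$. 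Since $x_{i,e}(\alpha)$ is monotonically nondecreasing in $i$, $R_i$ grows as a stream with elements entering one at a time. The output $F_i \in \I(\M)$ is then built greedily: when an element $e$ first enters $R_i$, add it to $F_{i-1}$ iff doing so preserves independence in $\M$; by construction $F_i$ is monotone in $i$.

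The core of the analysis is a per-step inequality
\[
E\bigl[f_i(F_i) - f_i(F_{i-1})\bigr] \;\geq\; \Omega(1/\log n) \cdot \bigl(\textrm{fractional gain at step } i\bigr),
\]
so that summing over $i$ and appealing to Lemma~\ref{lem:guess} and Lemma~\ref{lem:reduction} gives the theorem. Two ingredients enter. First, Lemma~\ref{lem:matroid-cover} applied to $\beta\,x(\alpha) \in P(\M)$ guarantees that with high probability the final $R$ can be covered by $O(\log n)$ independent sets of $\M$; qualitatively, each tentative element survives the greedy $\M$-filter with probability $\Omega(1/\log n)$. Second, because AlgG sets $z_{i,e} = x_{i,e}/2$ precisely when $e$ contributes to client $i$'s fractional profit and because $z_i \in P(\N_i)$ by the Feasibility Lemma, I would invoke a matroid-rank-rounding argument in the spirit of Chawla et al.'s Lemma~7 (cited in the introduction): processing the surviving elements in an arbitrary order and evaluating $\N_i$-rank recovers a constant fraction of $\sum_e \Pr[e \text{ enters } R_i \text{ at step } i]$, which by our scaling is $\Omega\bigl((\beta/x_e) \cdot z_{i,e}\bigr)$ contributed per element. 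Combining survival probability with this rank bound yields the per-step inequality.

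The main obstacle I anticipate is disentangling the greedy $\M$-filter from the $\N_i$-rank evaluation: the filter's decisions depend on the whole history of rounded elements and may in principle be correlated with exactly those elements that are valuable to client $i$. To overcome this, I plan to use Lemma~\ref{lem:matroid-cover} in its ``cover-by-independent-sets'' form, conditioning on a random labeling of $R$ into $O(\log n)$ $\M$-independent classes, so that the $\M$-rejections act as an $O(\log n)$-rate thinning essentially independent of the identity of $\N_i$. The uncrossing property of tight sets (Fact~\ref{fact:uncross}), already exploited in Lemma~\ref{lem:fractional-value}, will very likely reappear when lower-bounding $r_{\N_i}(F_i)$ against the LP quantity $\sum_e z_{i,e}$, since the sets witnessing the $\N_i$-rank of the surviving tentative elements are precisely those arising from the tight-set structure of $z_i$ in $P(\N_i)$.
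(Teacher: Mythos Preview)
Your accounting has a genuine gap. The per-step target
\[
E\bigl[f_i(F_i)-f_i(F_{i-1})\bigr]\;\ge\;\Omega(1/\log n)\cdot\sum_{e} z_{i,e}
\]
cannot hold when $\alpha$ is large. The left side is controlled by the elements \emph{newly} entering $R_i$ at step $i$, and under threshold sampling $\Pr[e\text{ enters at }i]=\beta\,\Delta x_{i,e}$, whereas $z_{i,e}=x_{i,e}/2$. By the update rule $\Delta x_{i,e}/x_{i,e}\le 24\log m/\alpha$, so the left side is smaller than the right by a factor of order $\alpha/\log m$; for $\alpha=\Theta(n)$ this is $\Theta(n/\log m)$ and destroys the bound. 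Put differently, your telescoping credits each element once (when it first appears), but in $LP_2(\alpha)$ each element contributes to $\Theta(\alpha)$ different clients, and your inequality never recovers that multiplier. (Also, the expression $\Omega((\beta/x_e)\,z_{i,e})$ you write for the entering probability is not what threshold sampling gives; that probability is $\beta\,\Delta x_{i,e}$, not $\beta\,x_{i,e}/x_e$.)

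The paper's rounding is essentially yours with $\beta=1/4$, but the analysis is organized around the \emph{cumulative} presence probabilities $y^i_e=\Pr[e\in F_i]$, not the increments. Lemma~\ref{lem:integral-copies} lower-bounds $E[r_i(F_i)]$ by $\frac{1}{O(\log n)}\sum_{e\in H_i} y^i_e$, and here Lemma~\ref{lem:matroid-cover} is applied to $\N_i$ (using $z_i\in P(\N_i)$), not to $\M$. The missing $\alpha$ factor is then supplied by Lemma~\ref{deltay}, which gives $y^i_e\ge (\alpha/(24\log m))\,\Delta y^i_e$; summing $\Delta y^i_e$ over $i$ telescopes to $y_e$, and a separate span argument (Lemma~\ref{lem:integral-size}) shows $\sum_e y_e\ge\frac18\sum_e x_e$, so the greedy $\M$-filter costs only a constant factor. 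Your plan to invoke Lemma~\ref{lem:matroid-cover} on $\M$ to get per-element survival $\Omega(1/\log n)$ through the greedy filter also does not follow from that lemma as stated (it yields a decomposition of the rounded set, not element-wise survival of an online greedy), and in any case the paper incurs no $\log$ loss at the $\M$-filter step.
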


The randomized algorithm follows the following simple rounding procedure.

\vspace{0.15cm}
\noindent \framebox[1.05\width][l]{
\begin{minipage}{0.94\linewidth}
{\bf Matroid Randomized Rounding Algorithm:}
\begin{itemize}\setlength{\itemsep}{1pt}\setlength{\parskip}{0pt}
\item $F\gets \emptyset$.
\item Guess the value $\alpha\in_R\{1, 2, 4 \ldots, n\}$.
\item Run AlgG with value $\alpha$.
\begin{itemize}
\item Whenever $x_e$ increases by $\Delta x_e$, if $F\cup \{e\}\in \I(\M)$ then $F\gets F\cup \{e\}$ with probability $\frac {\Delta x_e}{4}$.
\end{itemize}
\end{itemize}
\end{minipage}}
\vspace{0.15cm}

In order to prove our main theorem, we prove several crucial lemmas. The main idea is to tie the performance of the randomized algorithm to the performance of the fractional solution that is generated. In the process we lose a factor of $O(\log n)$.
We first introduce some notation. All of the following notation is with respect to the execution of the online algorithm for a fixed value of $\alpha$ and we omit it from the notation. Let $F_i$ denote the solution formed by the randomized algorithm at the end of iteration $i$ and let $F$ denote the final solution returned by the algorithm. Let $Y^i_e$ denote the indicator random variable that element $e$ has been selected \emph{till} iteration $i$. Let $\Delta Y^i_e$ denote the indicator random variable that element $e$ is selected in iteration $i$. Let $y^i_e=Pr[Y_e^i=1]$ and $\Delta y^i_e=Pr[\Delta Y_e^i=1]$. Finally, let $y_e$ denote the probability element $e$ is in the solution at the end of the execution.
Recall that $x_{i,e}$ denotes the value of the variable $x_e$ in the fractional solution after iteration $i$ and let $x_e$ denote the fractional value of element $e$ at the end of the execution of the fractional algorithm, and let $\Delta x_{i,e}$ be the change in the value of $e$ in iteration $i$.

Since the algorithm tosses a coin for element $e$ in iteration $i$ with probability $\Delta x_{i,e}/4$, therefore the probability that an element $e$ is included in the solution till iteration $i$ is at most $x_{i,e}/4$.
Our first lemma states that the expected number of elements chosen by the algorithm is at least half that amount in expectation and is comparable to the total size of the fractional solution. Thus, Lemma~\ref{lem:integral-size} plays the role of Lemma~\ref{lem:fractional-value} in the analysis of the randomized algorithm. 

\begin{lemma}\label{lem:integral-size}
Let $F$ be the solution returned by the randomized rounding algorithm, then $E[|F|] = \sum_{e \in E}y_e \geq \frac{\sum_{e\in E}x_e}{8}$.
\end{lemma}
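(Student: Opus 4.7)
My proof plan is to compute $E[|F|]$ iteration-by-iteration and show that at each attempted addition of an element $e$, the matroid blocks the attempt with probability at most $1/2$; combined with the coin-flip probability $\Delta x_{i,e}/4$, this yields $E[|F|]\ge \sum_e x_e/8$.

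The first step is a clean decomposition. For each pair $(i,e)$ with $\Delta x_{i,e}>0$, let $B_i^e$ denote the event that, at the moment iteration $i$ attempts element $e$, one has both $e\notin F$ and $F\cup\{e\}\in\I(\M)$ --- equivalently, $e$ lies outside the matroid closure of $F$. The coin used by the rounding at $(i,e)$ is fresh and independent of the history determining $B_i^e$, and $e$ is first added to $F$ at iteration $i$ exactly when $B_i^e$ holds and the coin is heads. Summing over the disjoint first-addition events,
$$y_e \;=\; \sum_i Pr[B_i^e]\,\frac{\Delta x_{i,e}}{4}, \qquad E[|F|] \;=\; \frac{1}{4}\sum_e\sum_i Pr[B_i^e]\,\Delta x_{i,e},$$
so it suffices to prove the key claim that $Pr[B_i^e]\ge 1/2$ for every $(i,e)$ with $\Delta x_{i,e}>0$.

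Two preliminary ingredients for the key claim are immediate. First, applying the decomposition to every $e'$ and using $Pr[B_i^{e'}]\le 1$ gives the marginal upper bound $Pr[e'\in F]\le x_{e'}/4$ at every moment, so $E[|F\cap S|]\le x(S)/4$ for every $S\subseteq E$. Second, AlgG only updates $x_e$ in iteration $i$ when $x(S)<r(S)$ holds strictly for every $S\ni e$, so Markov's inequality gives $Pr[|F\cap S|\ge r(S)]\le x(S)/(4r(S))<1/4$ for each such $S$. The real obstacle is that the complement of $B_i^e$ is the \emph{existence} event $\exists\,S\ni e:|F\cap S|=r(S)$, i.e.\ $e\in\mathrm{span}_\M(F)$, and a naive union bound over the exponentially many $S$ is hopeless.

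I expect bypassing this exponential union bound to be the main hard step. The plan is to combine two matroid tools. First, an integer analogue of the uncrossing property (Fact~\ref{fact:uncross}, which transfers to the integer setting since $|F\cap S|$ is modular and $r$ is submodular) shows that for any realization of $F$ the sets $S\ni e$ with $|F\cap S|=r(S)$ are closed under intersection and union, so all potential witnesses collapse to a unique maximal tight set containing $e$. Second, I would adapt the ordering/pricing argument of Chawla et al.~\cite{CHMS10} (their Lemma~7, which the introduction explicitly flags as the source of our rounding analysis): process the elements in an order consistent with an Edmonds-type decomposition $x=\sum_j\lambda_j\mathbf{1}_{I_j}$ of $x$ into indicator vectors of independent sets, and charge each candidate blocking witness to a single responsible earlier element. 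This converts the existence event into a sum of per-element blocking contributions, and the factor-$4$ slack built into the rounding probability $\Delta x_e/4$ (rather than $\Delta x_e$) controls this sum by $1/2$, yielding $Pr[B_i^e]\ge 1/2$. Granting the key claim, the decomposition then immediately gives $E[|F|] \ge \frac{1}{8}\sum_e x_e$, as required.
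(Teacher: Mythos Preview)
Your decomposition $y_e = \sum_i Pr[B_i^e]\,\Delta x_{i,e}/4$ is correct, but the key pointwise claim $Pr[B_i^e]\ge 1/2$ is \emph{false}, so the proof cannot be completed along this route. Here is a counterexample. Let $\M$ be the graphic matroid of the graph on vertices $u,v,w_1,\dots,w_k$ with edges $e=uv$ and $a_i=uw_i$, $b_i=w_iv$ for $i=1,\dots,k$. Take $x_{a_i}=x_{b_i}=\tfrac12$ for all $i$ and $x_e$ tiny; one checks $x\in P(\M)$ and $x(S)<r(S)$ for every $S\ni e$. Arrange the instance so that all $a_i,b_i$ reach these values before $e$ is ever updated. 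Since $F$ is built greedily from the coin outcomes, $F$ is a basis of $R=\{e':\text{some coin heads}\}$, hence $\mathrm{span}(F)=\mathrm{span}(R)$; and $e\in\mathrm{span}(R)$ iff some path $\{a_i,b_i\}\subseteq R$. With $Pr[a_i\in R]\approx Pr[b_i\in R]\approx 1/8$ independently, $Pr[\exists i:\{a_i,b_i\}\subseteq R]\approx 1-(1-\tfrac{1}{64})^k\to 1$ as $k\to\infty$. Thus $Pr[B^e_i]\to 0$, not $\ge 1/2$. Your appeal to the Chawla et al.\ ordering argument cannot rescue this: that lemma only produces an \emph{existence} of a good ordering (and is used in the paper for Lemma~\ref{lem:matroid-cover}, not here), whereas the processing order in the rounding is fixed by the algorithm, and in the example above $e$ is simply processed last.

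The paper's proof avoids any per-element bound and instead argues in aggregate. Alongside $F$ it tracks the set $H$ of elements whose coin came up heads but were blocked by the matroid. Every blocked element lies in $\mathrm{span}(F)$, so (via a simple coupling with an independent sample $H'\subseteq \mathrm{span}(F)\setminus F$ at marginals $x_e/4$) one gets $E[|H|]\le E\!\left[\sum_{e\in\mathrm{span}(F)} x_e/4\right]\le E\!\left[r(\mathrm{span}(F))\right]=E[|F|]$, using $x\in P(\M)$ and the independence of $F$. Since $E[|F|]+E[|H|]$ accounts for the total head mass $\sum_e x_e/4$, this yields $E[|F|]\ge \sum_e x_e/8$. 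In your counterexample the slack is absorbed by the other elements: each $a_i,b_i$ has $Pr[B^{a_i}]=Pr[B^{b_i}]=1$, so the aggregate bound holds even though $Pr[B^e]$ is tiny. The moral is that the factor $1/2$ you are after is only true on average, and the clean way to capture it is the span-based charging rather than a per-attempt Markov/uncrossing bound.
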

\begin{proof}
For every element $e$ in the $i^{th}$ iteration, the algorithm tosses a coin with probability $\Delta x_{i,e}/4>0$ and includes it in $F$ if the element shows up in the toss and $F\cup \{e\}$ is in $\I(\M)$ for the current $F$. We maintain a set $H$ in which we include an element which shows up in the toss but cannot be included in $F$. For any iteration $i$ and element $e$, we have $Pr[e\in F] +Pr[e\in H]=\Delta x_{i,e}/4$. Let $F$ denote the final set thus obtained. We now do the following random experiment. For every element $e\in span(F)\setminus F$, we toss a coin with probability $\frac{x_e}{4}$ and include it in a set $H'$. Since every element in $H$ must be in $span(F)$, by a simple coupling of the random tosses in these two experiments, we obtain $|H|\leq |H'|$. Moreover, $E[|H|]\leq E[|H'|]=\sum_{e\in span(F)}x_e/4\leq |F|$ where the expectation is taken over the random tosses used for constructing $H'$ since $x \in P_{\M}$ and $F\in \I(\M)$ . Taking expectations over the coin tosses used for finding $F$, we obtain $E[|H|]\leq E[|F|]$. But $E[|F|]+E[|H|]=\sum_{e\in E} x_e/4$. Thus $E[|F|]\geq  \frac{\sum_{e\in E}x_e}{8}$.
\end{proof}

Our second lemma relates the change in the probability an element is chosen to the change in the fractional solution. This lemma shows that a crucial property of the exponential update rule for the fractional solution is also satisfied by the integral solution. 
\begin{lemma}\label{deltay}
For each element $e$ and iteration $i$,
$\frac{\Delta y^i_e}{y^i_e} \leq \frac{\Delta x_{i,e}}{x_{i,e}} \leq \frac{24\log m}{\alpha}$.
\end{lemma}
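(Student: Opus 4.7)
The plan is to handle the two inequalities separately, the second being a direct consequence of the update rule and the first requiring a coupling of the randomized rounding to the fractional trajectory.

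For the bound $\Delta x_{i,e}/x_{i,e} \leq 24\log m/\alpha$, I would unwind the first clause of the multiplicative update~(\ref{mult-label22}). That clause gives $x_{i,e} \leq x_{i-1,e}\exp(8\log m/\alpha)$, so $\Delta x_{i,e} \leq x_{i-1,e}(\exp(8\log m/\alpha)-1)$. When $\alpha \geq 8\log m$ the elementary inequality $e^z-1\leq 3z$ on $[0,1]$ (already used in Lemma~\ref{obsxz}) yields $\Delta x_{i,e} \leq x_{i-1,e}\cdot 24\log m/\alpha$, and since $x_{i-1,e}\leq x_{i,e}$ the desired ratio bound follows. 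For $\alpha < 8\log m$ the bound $24\log m/\alpha > 1$ is already larger than the trivial $\Delta x_{i,e}/x_{i,e} \leq 1$, so nothing further is needed. If the update is instead capped by the second term of (\ref{mult-label22}), $\Delta x_{i,e}$ is only smaller and the same bound holds.

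For the first inequality, the plan is to exploit the fact that the rounding's coin for $e$ in iteration $i$ has bias $\Delta x_{i,e}/4$ independently of the history. Conditioning on the past, this gives
\begin{equation*}
\Delta y^i_e \;=\; \frac{\Delta x_{i,e}}{4}\cdot \Pr\!\left[e\notin F_{i-1}\text{ and }F_{i-1}\cup\{e\}\in\I(\M)\right].
\end{equation*}
The natural route is then to show that $y^i_e$ itself tracks a $1/4$-scaled copy of $x_{i,e}$ up to constants, so that the multiplicative growth rate of $y^i_e$ inherits the exponential envelope enjoyed by $x_{i,e}$. Concretely, I would aim to establish (inductively in $i$) a relation of the form $y^i_e \geq c\,x_{i,e}$ on the event that $e$ is not blocked by $F_{i-1}$, and then divide the identity above by $y^i_e$ to conclude $\Delta y^i_e/y^i_e \leq \Delta x_{i,e}/x_{i,e}$. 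The matroid independence event and the ``$e$ already present'' event only reduce $\Delta y^i_e$ relative to $\Delta x_{i,e}/4$, so they cannot hurt the inequality once the base comparability $y^i_e = \Theta(x_{i,e})$ is in place.

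The main obstacle is precisely this base comparability: immediately after the first iteration in which $e$ is rounded, $y^{i-1}_e$ is still $0$ while $x_{i-1,e}$ starts from the seed $1/m^2$, so ratios are delicate. The way around it is to use the seed initialization $x_{0,e}=1/m^2$ together with the step factor $\exp(8\log m/\alpha)$: any single jump of $x_e$ is only a bounded multiplicative factor, so the cumulative probability of selecting $e$ has had time to ``catch up'' to $x_{i,e}/4$ before a non-trivial increase can occur, and the step-by-step inequality can be recovered by a careful induction that merges the first few updates. Once this is in place, combining with the bound from the previous paragraph delivers the full chain $\Delta y^i_e/y^i_e \leq \Delta x_{i,e}/x_{i,e}\leq 24\log m/\alpha$.
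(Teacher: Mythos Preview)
Your treatment of the second inequality $\Delta x_{i,e}/x_{i,e}\le 24\log m/\alpha$ is fine and matches the paper exactly.

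For the first inequality, however, your plan does not work, and the paper takes a quite different route. You propose to establish, by induction, a lower bound of the form $y^i_e\geq c\,x_{i,e}$ and then divide. But $y^i_e$ and $x_{i,e}$ are deterministic quantities, so the phrase ``on the event that $e$ is not blocked'' has no meaning here; more importantly, no such constant $c$ exists. If the random set $F_{j-1}$ frequently spans $e$ (which can happen long before any set containing $e$ is tight in the fractional solution), then $y^i_e$ stays essentially fixed while $x_{i,e}$ keeps growing multiplicatively, so $y^i_e/x_{i,e}\to 0$. Even if a bound $y^i_e\ge c\,x_{i,e}$ did hold, dividing $\Delta y^i_e=(\Delta x_{i,e}/4)\cdot p_i$ by it yields $\Delta y^i_e/y^i_e\le (p_i/4c)\cdot \Delta x_{i,e}/x_{i,e}$, which only gives the claimed inequality when $c\ge p_i/4$; since $y^i_e\le x_{i,e}/4$ always, this would force $y^i_e=x_{i,e}/4$ exactly, i.e.\ no blocking at all.

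The paper's argument avoids any absolute comparison of $y^i_e$ to $x_{i,e}$. It observes that
\[
\frac{\Delta y^j_e}{\Delta x^j_e}\;=\;\frac14\cdot\Pr\bigl[F_{j-1}\cup\{e\}\in\I(\M)\bigr],
\]
and that this probability is \emph{monotonically decreasing} in $j$, simply because $F_{j-1}$ is an increasing set and independence is downward closed. Hence $\Delta y^i_e/\Delta x^i_e\le \Delta y^j_e/\Delta x^j_e$ for every $j\le i$, and since $y^i_e/x_{i,e}$ is a weighted average of the ratios $\Delta y^j_e/\Delta x^j_e$ for $j\le i$, it dominates the smallest of them, namely $\Delta y^i_e/\Delta x^i_e$. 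That is the whole proof: the key idea you are missing is this monotonicity of the per-step ratio, which makes any attempt to lower-bound $y^i_e$ by a fixed multiple of $x_{i,e}$ unnecessary (and, as above, impossible).
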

\begin{proof}
First, we prove that $\frac{\Delta x_{i,e}}{x_{i,e}} \leq \frac{8\log m}{\alpha}$. This follows from the exponential update rule.
If $\alpha \leq 8\log m$, then we have $\frac{\Delta x_{i,e}}{x_{i,e}} \leq 1 \leq \frac{24\log m}{\alpha}$.
Otherwise,
$$\frac{\Delta x_{i,e}}{x_{i,e}}  \leq \frac{x_e \cdot \exp \left(\left(\frac{8 \log m}{\alpha}\right)-1\right)}{x_e} \leq \frac{24 \log m}{\alpha}  $$
where the second inequality follows since $e^x-1 \leq 3x$ for $0 \leq x \leq 1$.

We next show that $\frac{\Delta y^i_e}{\Delta x_{i,e}} \leq \frac{ y^i_e}{x_{i,e}}$ implying the claim.
To show  $\frac{\Delta y^i_e}{\Delta x_{i,e}} \leq \frac{ y^i_e}{x_{i,e}}$, it suffices to show  $\frac{\Delta y^i_e}{\Delta x_{i,e}} \leq \frac{\Delta y^j_e}{\Delta x^j_e}$ whenever $j\leq i$ since $\frac{ y^i_e}{x_{i,e}} =  \frac{\sum_{j\leq i} \Delta y^j_e}{\sum_{j \leq i}\Delta x^j_e}$.

Now, observe that for any $j$, we have $\Delta y^j_e= Pr[e \textrm{ is included in }F \textrm{ in the $j^{th}$ iteration}]=\Delta x^j_e\cdot Pr[F_{j-1} \cup \{e\} \in \I(\M)]$. Therefore, $\frac{\Delta y^j_e}{\Delta x^j_e}=Pr[F_{j-1} \cup \{e\} \in \I(\M)]$ which is a decreasing function of $j$ since $F_{j-1}$ is an increasing set as a function of $j$.
\end{proof}

We next prove a general lemma regarding randomized rounding in any matroid polytope. The proof of the lemma utilizes a lemma proven in Chawla et al.~\cite{CHMS10}.

\begin{lemma}\label{lem:matroid-cover}
Given a matroid $\N=(E,\I)$ and a solution $z$ such that for all $S \subseteq E$, $z(S) \leq r(S)/2$, construct a set $F$ by including in $e\in F$ with probability $z_e$ for each $e\in E$ independently. Then, with high probability $(1-\frac{1}{m^2n^2})$, $F$ can be covered by $O(\log m+\log n)$ independent sets where $m=|N|$.
\end{lemma}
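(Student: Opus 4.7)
The plan is to invoke the classical matroid partition theorem of Edmonds and Nash-Williams to convert the covering question into a family of rank inequalities, and then to discharge those inequalities via Chernoff bounds combined with a reduction to flats that keeps the ensuing union bound polynomial in size.

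First I would observe that, by the matroid partition theorem, $F$ can be covered by $k$ independent sets of $\N$ if and only if $|F \cap S| \leq k \cdot r(S)$ for every $S \subseteq E$. Hence it suffices to show that, with probability at least $1 - 1/(m^2 n^2)$, the bound $|F \cap S| \leq C(\log m + \log n)\, r(S)$ holds for all $S \subseteq E$ simultaneously, for some absolute constant $C$. Next I would restrict attention to flats: for any $S \subseteq E$, the closure $\mathrm{cl}(S)$ satisfies $S \subseteq \mathrm{cl}(S)$ and $r(S) = r(\mathrm{cl}(S))$, so $|F \cap S| \leq |F \cap \mathrm{cl}(S)|$, and it suffices to check the inequality on closed sets only. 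Since every flat of rank $k$ is the closure of one of its bases, there are at most $\binom{m}{k} \leq m^k$ flats of rank $k$.

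Now for a fixed flat $S$ of rank $k$, the random variable $|F \cap S|$ is a sum of independent Bernoullis with mean $z(S) \leq r(S)/2 = k/2$. A standard Chernoff bound of the form $\Pr[X \geq t] \leq (e\mu/t)^t$ then yields $\Pr[|F \cap S| > C(\log m + \log n)\, k] \leq (mn)^{-\Omega(Ck)}$. Summing over all flats of each rank using the count $m^k$ gives a total failure probability at most $\sum_{k \geq 1} m^k \cdot (mn)^{-\Omega(Ck)}$, which is at most $1/(m^2 n^2)$ once $C$ is a sufficiently large absolute constant. On this event, every flat, and therefore every subset, satisfies the rank inequality with $k = C(\log m + \log n)$, so the partition theorem finishes the argument.

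The main obstacle is that the family of rank inequalities provided by the partition theorem is a priori of exponential size, so a naive union bound over all subsets of $E$ would be hopeless. The crucial step that makes everything work is the reduction to flats via the closure operator, since each inequality is dominated by one on a flat and the number of flats of a given rank is at most $\binom{m}{k}$; once this reduction is in hand, the remainder is a routine Chernoff-plus-geometric-sum calculation.
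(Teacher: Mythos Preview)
Your argument is correct and takes a genuinely different route from the paper's proof. The paper proceeds constructively: it defines an iterative process that peels off independent sets $I_1, I_2, \ldots$ one at a time, couples this process with the independent sampling of $F$, and then invokes an ordering lemma (essentially Lemma~7 of Chawla et al.) showing that in each round every surviving element is ``touched'' with probability at least $1/2$; hence after $O(\log m + \log n)$ rounds nothing survives, with the desired probability. Your approach is non-constructive: you translate the covering question into the rank inequalities $|F\cap S|\le k\, r(S)$ via the Edmonds--Nash-Williams partition theorem, reduce to flats using the closure operator, and then discharge everything with a Chernoff bound plus a union bound over the at most $\binom{m}{k}$ flats of each rank $k$.

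Each approach has its merits. Yours is arguably more elementary---it needs no auxiliary ordering lemma, only standard matroid facts and a multiplicative Chernoff inequality---and the reduction to flats is exactly the right trick to tame the exponential family of constraints. The paper's proof, on the other hand, actually exhibits the covering and ties the result to the random-order techniques used elsewhere in the paper; it also sidesteps any concentration argument entirely. One small point worth making explicit in your write-up: flats of rank~$0$ consist of loops, but the hypothesis $z(S)\le r(S)/2$ forces $z_e=0$ on every loop, so $F$ contains no loops almost surely and the partition criterion cannot fail there.
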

\begin{proof}
The lemma follows by a coupling argument on the following process. Define the following process:
\begin{itemize}
\item Set $S \gets E$, set $I_1, I_2, \ldots, I_m \gets \emptyset$.
\item As long as $S$ is not empty iterate $i=1, 2, \ldots, m$.
\begin{itemize}
\item In each iteration $i$, order the elements by any order (we later define a good ordering).
\item Go over the elements in $S$ one-by-one according to the ordering. For each element $e$, if $I_i \cup \{e\}$ is independent in $\N$:
    \begin{itemize}
\item Toss a coin for $e$ and include it in $I_i$ with probability $z_e$.
\item $S \gets S \setminus \{e\}$.
\end{itemize}
\item Otherwise, skip $e$ in iteration $i$.
\end{itemize}
\end{itemize}

Observe that we always remove the first element in the ordering in each iteration from $S$, so there are at most $m$ iterations.
Also, note that each $I_i$ is an independent set. We first claim that this process is equivalent to the process of tossing a coin for each element independently. That is, for each $S' \subseteq E$:

\begin{eqnarray}
Pr[I_1 \cup I_2 \cup \ldots \cup I_m = S'] = \prod_{e \in S'}z_e \prod_{e \notin S'}(1-z_e)
\end{eqnarray}

This is very simple. For each element $e$, there exists a single iteration $1 \leq j \leq m$, in which we toss a coin for $e$. In this iteration it is included with probability $z_e$ that is independent of all other choices.
Thus, the process actually defines a covering by independent sets of the process of selecting each element independently with probability $z_e$.
If we can prove that with high probability $S$ is empty after only a logarithmic number of rounds we are done. This might not be true for any ordering, however, we prove that there exists some ordering for which this is true. The following lemma essentially follows from Lemma 7 in \cite{CHMS10}. We include a proof here for completeness.

\begin{claim}\label{cl:order}
Given a matroid $\N=(E,\I)$ with a rank function $r$ and a fractional solution $z$ such that for all $S \subseteq E$, $x(S) \leq r(S)/2$, there exists an ordering of the elements, such that for all $e \in E$, the probability we toss a coin for $e$ is at least $1/2$.
\end{claim}

\begin{proof}
We prove that there always an element $e$ that we can put last and if $I$ is the independent set constructed by including rest of the elements in any order $I\cup \{e\}\in \I(\N)$ with probability at least $\frac12$.  The claim then follows from recursing on rest of the elements.

Suppose we put some $e \in E$ last in the order. Let $D$ be the set constructed by selecting each element with probability $z_e$. When we reach $e$, we will toss a coin for $e$ only if $e \notin span(D\setminus \{e\})$.
Let $p_e$ be the probability $e$ is being tossed when it is put last. Thus,

$$p_e = Pr[e \notin span(D\setminus \{e\})] \geq Pr[e \notin span(D)]$$

Let $B$ be any base for the matroid and let $k$ be the rank of the matroid.
Then we know that:

\begin{eqnarray}
\sum_{e \in B} p_e & \geq &  \sum_{e\in B} Pr[e \notin span(D)] \label{in-set1} \\
& = & \sum_{D}Pr[D]\left(k-rank(D)\right) \nonumber \\
& = & k- E[\mbox{rank}(D)] \nonumber \\
& \geq & k-E[|D|] \geq k/2 \label{in-set2}
\end{eqnarray}
Inequality (\ref{in-set2}) follows since by our assumption $E[|D|]=\sum_{e\in E} z_e \leq k/2$.
Averaging we obtain that there exists an element in $B$ such that $p_e\geq \frac12$. We put this element last in the order and recurse.
\end{proof}

Suppose in each iteration we choose an ordering given by Claim~\ref{cl:order}. Then for each element $e \in E$ and round $i$, the probability it ``survives" in $S$ given that it ``survived" previous iterations is at most $1/2$. Thus,
the probability there exists an element in $S$ after $O(\log m+\log n)$ iterations is at most $1/(mn)^c$ for some large constant $c$.
\end{proof}

We now prove a relation between the profit obtained by the algorithm at iteration $i$, denoted by the random variable $r_i(F_i)$, and the events that a particular set of elements are chosen in the solution. For any $i$, Let $H_i$ denote the set of elements such that $z_{i,e}>0$. Note that $z_{i,e}>0$ iff $\Delta x_{i,e} >0$.

\begin{lemma}\label{lem:integral-copies}
$\sum_{i=1}^n E[r_i(F_i)] \geq \frac{1}{c\log n} \sum_{i=1}^n \sum_{e\in H_i} y^i_e$, where $c$ is some constant.
\end{lemma}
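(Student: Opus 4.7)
The plan is, iteration by iteration, to lower bound $E[r_i(F_i)]$ by comparing the rank $r_i(F_i \cap H_i)$ to the expected cardinality of $F_i \cap H_i$, losing only an $O(\log n)$ factor via Lemma~\ref{lem:matroid-cover}. The main obstacle is that the matroid-$\M$ feasibility check inside the algorithm introduces correlations in $F_i$, so Lemma~\ref{lem:matroid-cover} (which needs independent rounding) cannot be applied to $F_i$ directly; the resolution is to dominate $F_i$ by a strictly larger, independently-sampled set to which the lemma does apply.

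First I would set up the coupling. The only randomness in the algorithm is a collection of independent Bernoulli coin flips, one per pair $(j,e)$ with bias $\Delta x_{j,e}/4$, and an element enters $F$ only when its coin lands heads. Define $D_i$ to be the set of elements whose coin landed heads in some iteration $\leq i$. Because coins for distinct $e$ are independent, $\{\mathbf{1}_{e \in D_i}\}_{e \in E}$ are mutually independent with $\Pr[e \in D_i] \leq \sum_{j \leq i} \Delta x_{j,e}/4 \leq x_{i,e}/4$. Crucially, $F_i \subseteq D_i$, so $F_i \cap H_i \subseteq D_i \cap H_i$, and the latter is produced by truly independent rounding.

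Second, I would verify the hypothesis of Lemma~\ref{lem:matroid-cover} for matroid $\N_i$ with the probability vector $p_e := \Pr[e \in D_i \cap H_i]$. For $e \notin H_i$ we have $p_e = 0$, while for $e \in H_i$ the update rule forces $z_{i,e} = x_{i,e}/2$ and hence $p_e \leq x_{i,e}/4 = z_{i,e}/2$. Since Lemma~\ref{lem-feasible} gives $z_i \in P(\N_i)$, for every $S \subseteq E$,
\[ p(S) \;\leq\; \sum_{e \in S \cap H_i} z_{i,e}/2 \;=\; z_i(S)/2 \;\leq\; r_i(S)/2. \]
Applying Lemma~\ref{lem:matroid-cover} to $\N_i$ with the vector $p$ yields that, with probability at least $1 - 1/\mathrm{poly}(m,n)$, the set $D_i \cap H_i$ can be covered by $k = O(\log m + \log n)$ independent sets of $\N_i$.

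Third, I would finish by pigeonhole and expectation. On the good event, the covering of $D_i \cap H_i$ restricts to a covering of $F_i \cap H_i$ by at most $k$ independent sets of $\N_i$, so the largest such piece contained in $F_i \cap H_i$ has size at least $|F_i \cap H_i|/k$; hence $r_i(F_i) \geq r_i(F_i \cap H_i) \geq |F_i \cap H_i|/k$. Taking expectations, the contribution from the bad event is at most $m/\mathrm{poly}(m,n)$ and therefore negligible, giving $E[r_i(F_i)] \geq E[|F_i \cap H_i|]/k - o(1) = \bigl(\sum_{e \in H_i} y^i_e\bigr)/k - o(1)$. Summing over $i = 1, \ldots, n$ and absorbing the $\log m$ into $\log n$ (harmless since the overall competitive ratio already carries a $\log m$ factor) yields the claimed inequality with an absolute constant $c$.
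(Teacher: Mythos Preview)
Your proof is correct and follows essentially the same route as the paper's: define $F_i' = F_i\cap H_i$, dominate it by an independently-sampled superset (your $D_i\cap H_i$ is the paper's $F_i''$, built from the algorithm's own coin flips rather than a freshly coupled sample), invoke Lemma~\ref{lem:matroid-cover} on matroid $\N_i$ using $p_e\le z_{i,e}/2$ and feasibility of $z_i$ from Lemma~\ref{lem-feasible}, then pigeonhole to get $r_i(F_i)\ge |F_i\cap H_i|/O(\log(mn))$ on the good event and absorb the $1/\mathrm{poly}(mn)$ failure into the error term. The only cosmetic point is that both your bound and the paper's own proof actually yield $O(\log(mn))$ rather than the $O(\log n)$ stated in the lemma; the paper quietly writes $c\log n$ in the statement but $c\log(mn)$ in the derivation, so your remark about ``absorbing the $\log m$'' is exactly how the paper handles it downstream.
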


\begin{proof}
Let $F_i'=F_i\cap H_i=\{e\in F_i: z_{i,e} >0\}$. In other words, $F_i'$ is the set of elements included by the algorithm in iteration $i$. Observe that each element $e$ is included in $F_i'$ with probability at most $x_{i,e}/4 = z_{i,e}/2$ and is dropped sometime even when the toss comes up correct if adding $e$ violates the independence constraints for matroid $\M$. Now consider the set $F_i''$ where each element is included with probability $z_{i,e}/2$ with no regard to the matroid constraints. Coupling the tosses of the two random experiments, we get  $F'_i\subseteq F''_i$.
Since $z_i\in P(\N_i)$ by Lemma \ref{lem-feasible}, Lemma~\ref{lem:matroid-cover} implies that $F_i''$ is covered by at most $O(\log (mn))$ independent sets with high probability $1-1/(mn)^c$ for some large constant $c$.
Let $A_i$ be the event $F_i'$ can be covered by at most $O(\log (mn))$ independent sets in $\N_i$, where $m= |E|$. Therefore, we get that:

$$E[r_i(F'_i) | A_i] \geq \frac{E[\sum_{e\in F_i'}Y_e^i |A_i]}{\log (mn)}$$

Let $A$ be the event that for all $i$, $F'_i$ can be covered by $O(\log(mn))$ independent sets.
Then,
\begin{small}
\begin{eqnarray*}
 \sum_{i=1}^n E[r_i(F_i)] & \geq & \sum_{i=1}^n  E[r_i(F'_i)] \geq  \sum_{i=1}^n  E[r_i(F'_i) | A] \cdot \Pr[A]
\geq \sum_{i=1}^n \frac{E[\sum_{e\in H_i}Y_e^i |A]}{c \log (m n)} \cdot \Pr[A] \\
 &\geq & \frac{1}{c\log (m n)}\left(E[\sum_{i=1}^n \sum_{e\in H_i} Y^i_e] - nm \cdot Pr[\bar{A}]\right)
 = \Omega\left(\frac{1}{\log (m n)}\sum_{i=1}^n \sum_{e\in H_i}y^i_e\right)
\end{eqnarray*}
\end{small}
Here the last inequality follows since $1 \leq \sum_{i=1}^n \sum_{e\in H_i}y^i_e \leq m \cdot n$ and $\Pr[\bar{A}] \leq \frac{1}{{(mn)}^{c'}}$ for some constant $c'$. (If $\sum_{i=1}^n \sum_{e\in H_i}y^i_e<1$ then the statement follows easily).
\end{proof}
Now we have all the ingredients to prove Theorem~\ref{theorem:rounding}.

\begin{proofof}{Theorem~\ref{theorem:rounding}}
We prove that the expected profit of the algorithm with a guess $\alpha$ is at least $\Omega\left(\frac{OPT_\alpha}{\log m \log n}\right)$. Since each $\alpha$ is guessed with probability $1/\log n$ and the value of OPT is the sum over all values $\alpha$ we get the desired. The expected profit of the algorithm when we guess $\alpha$ is at least.
\begin{displaymath}
\begin{array}{ccll}
  \sum_{i=1}^n E[ f_i(F_i)] &\geq &  \frac{1}{c\log n}\sum_{i=1}^n  \sum_{e\in H_i} y^i_e \qquad &(Lemma~\ref{lem:integral-copies}) \vspace{0.2cm} \\
 &\geq & \sum_{i=1}^n \sum_{e\in H_i} \frac{\alpha}{c'\log m\log n} \Delta y^i_e & (Lemma~\ref{deltay}) \vspace{0.2cm}\\
 &=&\sum_{e\in E}\frac{\alpha}{c'\log m \log n}  y_e   & (\sum_{i: e\in H_i}\Delta y_e^i=y_e) \vspace{0.2cm}\\
  &\geq & \sum_{e\in E} \frac{\alpha}{8c'\log m \log n} x_e  \qquad \qquad &(Lemma~\ref{lem:integral-size}) \vspace{0.2cm} \\
   & = & \Omega\left(\frac{\alpha \cdot n_\alpha}{\log m \log n}\right) = \Omega\left(\frac{OPT_\alpha}{\log m \log n}\right) &(Lemma~\ref{lem:fractional-value})
\end{array}
\end{displaymath}
\end{proofof}

\section{Extensions}\label{sec:extensions}

In this section, we give simple reductions to deal with various extensions including the case of arbitrary weighted rank functions and the case when the algorithm does not the number of arrivals $n$ in advance. 

\subsection*{Extension to the weighted version}\label{sec:weighted}
In this section we show a simple reduction from the weighted rank function case to the unweighted case losing an additional factor of $O(\log f_{ratio})$. The same reduction should be used when considering the randomized algorithm in Section \ref{sec:randomized}.
We first deal with the case in which $f_{ratio}$ is known in advance. In this case, to solve the weighted version, first guess a value $j\in \{0,1, \ldots, f_{ratio}\}$. Let $f_{min}=\min_{i,e:f_{i}(\{e\})>0} f_i(\{e\})$. Then for each $i$, replace $f_{i}$ with the following $f_{ij}$ which has a restriction on the elements' weights.
$f_{ij}(S)=\max_{I\subseteq S, I\in \I(\M)}|\{e\in I, 2^j\cdot f_{min}\leq w_{i,e} < 2^{j+1}\cdot f_{min}\}|$.
It is clear that for any $f_i$ and any $S$:
$$f_i(S) \leq 2 \cdot \sum_{j=0, 1, \ldots, f_{ratio}}2^j \cdot f_{min}\cdot f_{ij}(S)$$
As for each $j$ we get a competitive algorithm, we lose an extra $O(\log f_{ratio})$ due to the guess at the beginning.

If $f_{ratio}$ is unknown, we modify our guess slightly. We guess value $2^i$ with probability $\frac{1}{c i (\log(1+i))^{1+\epsilon}}$, for a suitable $c$ that depends on $\epsilon$.
We then make the same changes as before with this value. It is easy to verify that with a suitable $c$, the sum of probabilities over all values $i$ is at most $1$. Moreover, the probability of each weight $w$ class is $\Omega(\frac{1}{\log w (\log \log w)^{1+\epsilon}} = \frac{1}{\log f_{ratio} (\log \log f_{ratio})^{1+\epsilon}}$, and so we lose an extra $O(\log f_{ratio} (\log \log f_{ratio})^{1+\epsilon})$ due to the guess.

\subsection*{Dealing with an unknown $n$}\label{sec:double}
In many cases the number of submodular functions $n$ is unknown in advance. We show here a simple modification to the algorithm that can handle such cases while losing an additional $\log \log n$ factor.
The idea is simply to replace the guessing stage in the algorithm. In the modified guessing stage, we guess value $\alpha= 2^i$ with probability $\frac{1}{c i (\log(1+ i))^{1+\epsilon}}$, where $c$ is a constant that depends on $\epsilon$. We then run the algorithm with our guess.
First, note that (with a suitable value $c$) the total sum of probabilities of all values $\alpha$ is bounded.

$$\sum_{i=1}^{\infty}\frac{1}{c \cdot i (\log(1+ i))^{1+\epsilon}} <1$$

To analyze the performance, assume that the number of submodular functions is $n$, then by Lemma \ref{lem:guess}, for any $\alpha\leq n$ our profit is $\Omega\left(\frac{OPT_\alpha}{\log m}\right)$. Thus, the total expected profit of the algorithm is at least:

$$\sum_{i=1}^{\log n} \frac{1}{c \cdot i (\log(1+ i))^{1+\epsilon}}\frac{OPT_\alpha}{\log m} \geq \sum_{i=1}^{\log n} \frac{1}{c \cdot \log n (\log \log n)^{1+\epsilon}}\frac{OPT_\alpha}{\log m} = \Omega\left(\frac{OPT}{\log m \log n (\log \log n)^{1+\epsilon}}\right)$$

The same strategy also applies to the randomized rounding algorithm.

%

\vspace{-0.3cm}
\bibliographystyle{plain}
\bibliography{matroidbib}

\end{document}